\newcommand{\RNum}[1]{\uppercase\expandafter{\romannumeral #1\relax}}
\newcommand{\vecalloc}{\vec{S_g}}
\newcommand{\vecall}{\vec{S}}
\DeclareMathOperator*{\argmin}{\arg\!\min~}
\DeclareMathOperator*{\argmax}{\arg\!\max~}
\newcommand{\CARM}{\textsc{CA-Greedy}\xspace}
\newcommand{\CSRM}{\textsc{CS-Greedy}\xspace}
\newcommand{\RM}{\textsc{RM}\xspace}
\newcommand{\RMLong}{\textsc{Revenue-Maximization}\xspace}
\newcommand{\cpe}[1]{cpe({#1})}
\newcommand{\fastca}{\textsc{TI-CARM}\xspace}
\newcommand{\fastcs}{\textsc{TI-CSRM}\xspace}
\newcommand{\spara}[1]{\smallskip\noindent{\bf #1}}
\newtheorem{definition}{Definition}
\newtheorem{theorem}{Theorem}
\newtheorem{lemma}{Lemma}
\newtheorem{problem}{Problem}
\newtheorem{observation}{Observation}
\newcommand{\Card}[1]{\left\vert{#1}\right\vert}
\newcommand{\LL}[1]{\textcolor{black}{#1}}
\newcommand{\CA}[1]{\textcolor{black}{#1}}
\newcommand{\AR}[1]{\textcolor{black}{#1}}
\newcommand{\ravi}[1]{\textcolor{black}{#1}}
\newcommand{\CN}[1]{\textcolor{black}{#1}}
\newcommand{\CC}[1]{\textcolor{black}{#1}}
\newcommand{\sstar}{S^*} 
\newcommand{\ssample}{S^+} 
\newcommand{\stilde}{\tilde{S}}
\newcommand{\vecstilde}{\vec{\tilde{S}}}
\newcommand{\vecssample}{\vec{S}^+}
\newcommand{\vecsstar}{\vec{S}^*}
\newcommand{\isfrac}[1]{F_{\RR_i}(#1)}
\newcommand{\spi}{\tilde{\pi}} 
\newcommand{\srho}{\tilde{\rho}} 
\newcommand{\rbudget}[1]{\tilde{B}_{#1}} 
\newcommand{\etb}[2]{\eta_{#1,#2}} 
\newcommand{\sib}[1]{\bar{s}_{#1}} 
\newcommand{\prob}[1]{\ensuremath{\mathrm{Pr}}\left[#1\right]}
\newcommand{\NPhard}{NP-hard\xspace}
\newcommand{\SPhard}{\#P-hard\xspace}
\newcommand{\E}{\mathbb{E}}
\newcommand{\RR}{\mathbf{R}}
\newcommand{\eat}[1]{}
\newcommand{\squishlist}{
 \begin{list}{$\bullet$}
  {  \setlength{\itemsep}{0pt}
     \setlength{\parsep}{3pt}
     \setlength{\topsep}{3pt}
     \setlength{\partopsep}{0pt}
     \setlength{\leftmargin}{2em}
     \setlength{\labelwidth}{1.5em}
     \setlength{\labelsep}{0.5em}
} }
\newcommand{\squishlisttight}{
 \begin{list}{$\bullet$}
  { \setlength{\itemsep}{0pt}
    \setlength{\parsep}{0pt}
    \setlength{\topsep}{0pt}
    \setlength{\partopsep}{0pt}
    \setlength{\leftmargin}{2em}
    \setlength{\labelwidth}{1.5em}
    \setlength{\labelsep}{0.5em}
} }
\newcommand{\squishdesc}{
 \begin{list}{}
  {  \setlength{\itemsep}{0pt}
     \setlength{\parsep}{3pt}
     \setlength{\topsep}{3pt}
     \setlength{\partopsep}{0pt}
     \setlength{\leftmargin}{1em}
     \setlength{\labelwidth}{1.5em}
     \setlength{\labelsep}{0.5em}
} }
\newcommand{\squishend}{
  \end{list}
}
\newcommand{\flix}{\textsc{Flixster}\xspace}
\newcommand{\livej}{\textsc{LiveJournal}\xspace}
\newcommand{\epi}{\textsc{Epinions}\xspace}
\newcommand{\dblp}{\textsc{DBLP}\xspace}
\newcommand{\ra}{\rightarrow}
\newcommand{\calX}{\mbox{$\mathcal{X}$}}
\newcommand{\calE}{\mbox{$\mathcal{E}$}}
\newcommand{\calC}{\mbox{$\mathcal{C}$}}
\newcommand{\calI}{\mbox{$\mathcal{I}$}}
\newcommand{\calU}{\mbox{$\mathcal{U}$}}
\newcommand{\posreals}{\mathbb{R}_{\ge 0}}
\begin{document}
\title{Revenue Maximization in Incentivized Social Advertising}
\author{\begin{tabular}{cccc}
Cigdem Aslay & \hspace{2mm}  Francesco Bonchi &  \hspace{1mm} Laks V.S. Lakshmanan &  Wei Lu\\
\affaddr{ISI Foundation} & \hspace{2mm}  \affaddr{ISI Foundation} &  \hspace{1mm} \affaddr{Univ. of British Columbia} &  \affaddr{LinkedIn Corp.}\\
\affaddr{Turin, Italy} & \hspace{2mm} \affaddr{Turin, Italy}  &  \hspace{1mm} \affaddr{Vancouver, Canada} & \affaddr{Sunnyvale, CA, USA}\\
{\sf cigdem.aslay@isi.it} & \hspace{2mm}  {\sf francesco.bonchi@isi.it} &  \hspace{1mm}{\sf laks@cs.ubc.ca} & {\sf wlu@linkedin.com}
\end{tabular}
}

\maketitle
\sloppy

%


\begin{abstract}
\enlargethispage{4\baselineskip}
Incentivized social advertising, an emerging marketing model, provides monetization opportunities not only to the owners of the social networking  platforms but also to their influential users by offering a ``cut'' on the advertising revenue. We consider a social network (the host) that sells ad-engagements to advertisers by inserting their ads, in the form of promoted posts, into the feeds of carefully selected ``initial endorsers" or seed users: these users receive monetary incentives in exchange for their endorsements. The endorsements help propagate the ads to the feeds of their followers. Whenever any user of the platform engages with an ad, the host is paid some fixed amount by the advertiser, and the ad further propagates to the feed of her followers, potentially recursively. In this context, the problem for the host is is to allocate ads to influential users, taking into account the propensity of ads for viral propagation, and carefully apportioning the monetary budget of each of the advertisers between incentives to influential users and ad-engagement costs, with the rational goal of maximizing its own revenue.
In particular, we consider a monetary incentive for the influential users, which is proportional to their influence potential.

We show that, taking all important factors into account, the problem of revenue maximization in incentivized social advertising corresponds to the problem of monotone submodular function maximization, subject to a partition matroid constraint on the ads-to-seeds allocation, and submodular knapsack constraints on the advertisers' budgets. We show that this problem is NP-hard and devise two greedy algorithms with provable approximation guarantees, which differ in their sensitivity to seed user incentive costs.

Our approximation algorithms require repeatedly estimating the expected marginal gain in revenue as well as in advertiser payment. By exploiting a connection to the recent advances made in scalable estimation of expected influence spread, we devise efficient and scalable versions of our two greedy algorithms. An extensive experimental assessment confirms the high quality of our proposal.

\end{abstract}

\thispagestyle{empty}

\section{Introduction}
\label{sec:intro}

The rise of online advertising platforms has generated new opportunities for advertisers in terms of personalizing and targeting their marketing messages. When users access a platform, they leave a trail of information that can be correlated with their consumption tastes, enabling better targeting options for advertisers. Social networking platforms particularly can gather large amounts of users' shared posts that stretches beyond general demographic and geographic data. This offers more advanced interest, behavioral, and connection-based targeting options, enabling a level of personalization that is not achievable by other online advertising channels. Hence, advertising on social networking platforms has been one of the fastest growing sectors in the online advertising landscape: a market that did not exist until Facebook launched its first advertising service in May $2005$, is projected to generate $\$11$ billion revenue by $2017$, almost doubling the $2013$ revenue\footnote{\scriptsize \url{http://www.unified.com/historyofsocialadvertising/}}.

\spara{Social advertising.} Social advertising models are typically employed by platforms such as Twitter, Tumblr, and Facebook through the implementation of \emph{promoted posts} that are shown in the ``news feed" of their users.\footnote{\scriptsize According to a recent report, Facebook's news feed ads have $21$ times higher click-through rate than standard web retargeting ads and  $49$ times the click-through rate of Facebook's right-hand side display ads: see \url{https://blog.adroll.com/trends/facebook-exchange-news-feed-numbers}.} A promoted post can be a video, an image, or simply a textual post containing an advertising message.
Social advertising models of this type are usually associated with a \emph{cost per engagement} (CPE) pricing scheme: the advertiser does not pay for the ad impressions, but pays the platform owner (hereafter referred to as the \emph{host}) only when a user actively engages with the ad. The \emph{engagement} can be in the form of a social action such as \emph{``like"}, \emph{``share''}, or \emph{``comment''}: in this paper we blur the distinction between these different types of actions, and generically refer to them all as  \emph{engagements} or \emph{clicks} interchangeably.

Similar to organic (i.e., non-promoted) posts, promoted posts can propagate from user to user in the network\footnote{\scriptsize Tumblr's CEO D. Karp reported (CES 2014) that a normal post is reposted on average 14 times, while promoted posts are on average reposted more than 10\,000 times: \url{http://yhoo.it/1vFfIAc}.}, potentially triggering a viral contagion: whenever a user $u$ engages with an ad $i$, the host is paid some fixed amount by the advertiser (the CPE). Furthermore, $u$'s engagement with $i$ appears in the feed of $u$'s followers, who are then exposed to ad $i$ and could in turn be influenced to engage with $i$, producing further revenue for the host~\cite{bakshy12,tucker12}.



\enlargethispage{2\baselineskip}
\spara{Incentivized social advertising.} In this paper, we study the novel model of \emph{incentivized social advertising}. Under this model, users selected by the host as  \emph{seeds} for the campaign on a specific ad $i$, can take a ``cut'' on the social advertising revenue. These users are typically selected because they are influential or authoritative on the specific topic, brand, or market of $i$.

A recent report\footnote{\scriptsize \url{http://www.theverge.com/2016/4/19/11455840/facebook-tip-jar-partner-program-monetization}} indicates that Facebook is experimenting with the idea of incentivizing users. YouTube launched a revenue-sharing program for prominent users in 2007. Twitch, the streaming platform of choice for gamers, lets partners make money through revenue sharing, subscriptions, and merchandise sales. YouNow, a streaming platform popular among younger users, earns money by taking a cut of the tips and digital gifts that fans give its stars. On platforms without partner deals, including Twitter and Snapchat, celebrity users often strike sponsored deals to include brands in their posts, which suggests potential monetization opportunities for Twitter and Snapchat\footnote{\scriptsize \url{http://www.wsj.com/articles/more-marketers-offer-incentives-for-watching-ads-1451991600}}.


In this work, we consider incentives that are
determined by the topical influence of the seed users for the specific ad. More concretely, given an ad $i$, the financial incentive that a seed user $u$ would get for engaging with $i$ is a function of the social influence that $u$ has exhibited in the past in the topic of $i$. For instance, a user who  often produces relevant content about long-distance running, capturing the attention of a relatively large audience, might be a good seed for endorsing a new model of running shoes. In this case, her past demonstrated influence on this very topic would be taken into consideration when defining the lumpsum amount for her engagement with the new model of running shoes. The same user could be considered as a seed for a new model of tennis shoes, but in that case the incentive might be lower, due to her lower past influence demonstrated. To summarize, incentives are paid by the host to users selected as seeds. These incentives count as seeding costs and depend on the topic of the ad and the user's past demonstrated influence in the topic.

The incentive model above has several advantages. First, it captures in a uniform framework both the ``celebrity-influencer", whose incentives are naturally very high (like her social influence), and who are typically preferred by more traditional types of advertising such as TV ads; as well as the ``ordinary-influencer''~\cite{bakshy2011}, a non-celebrity individual who is an expert in some specific topic, and thus has a relatively restricted audience, or tribe, that trust her. Second, incentives not only play their main role, i.e., encourage the seed users to endorse an advert campaign, but also, as a by-product, they incentivize users of the social media platform to become  influential in some topics by actively producing \emph{good-quality content}.  This has an obvious direct benefit for the social media platform.



\enlargethispage{\baselineskip}
\spara{Revenue maximization.} In the context of incentivized social advertising, we study the fundamental problem of revenue maximization from the host perspective: an advertiser enters into an agreement with the host to pay, following the CPE model, a fixed price $\cpe{i}$ for each engagement with ad $i$. The agreement also specifies the finite budget $B_i$ of the advertiser for the campaign for ad $i$. The host has to carefully select the seed users for the campaign: given the maximum amount $B_i$ that it can receive from the advertiser, the host must try to achieve as many engagements on the ad $i$ as possible, while spending as little as possible on the incentives for ``seed'' users. The host's task gets even more challenging by having to simultaneously accommodate multiple  campaigns by different advertisers. Moreover, for a fixed time window (e.g., 1 day, or 1 week), the host can select each user as the seed endorser for at most one ad: this constraint maintains higher credibility for the endorsements and avoids the undesirable situation where, e.g., the same sport celebrity endorses Nike and Adidas in the same time window. Therefore two ads $i$ and $j$, which are in the same topical area, naturally compete for the influential users in that area.

We show that, taking all important factors (such as topical relevance of ads, their propensity for social propagation, the topical influence of users, seed  incentives and advertiser budgets) into account, the problem of revenue maximization in incentivized social advertising corresponds to the problem of \emph{monotone submodular function maximization subject to a partition matroid constraint on the ads-to-seeds allocation, and submodular knapsack constraints on the advertisers' budgets}. This problem is NP-hard and furthermore is far more challenging than the classical influence maximization problem (IM) \cite{kempe03}  and its variants.
For this problem, we develop two natural greedy algorithms, for which we provide formal approximation guarantees. The two algorithms differ in their sensitivity to cost-effectiveness in the seed user selection:
\squishlist
  \item \emph{Cost-Agnostic Greedy Algorithm} (\CARM), which greedily chooses the seed users based on the marginal gain in the revenue, without using any information about the users' incentive costs;
  \item \emph{Cost-Sensitive Greedy Algorithm} (\CSRM), which greedily chooses the seed users based on the \emph{rate} of marginal gain in revenue per marginal gain in the advertiser's payment for each advertiser.
\squishend



Our results generalize the results of Iyer \emph{et al.}~\cite{iyer2013submodular, iyer2015submodularthesis} on submodular function maximization by $(i)$ generalizing from
a single submodular knapsack constraint to multiple submodular knapsack constraints, and $(ii)$ by handling an additional partition matroid constraint.
Our theoretical analysis leverages the notion of curvature of submodular functions.
%


Our approximation algorithms require repeatedly estimating the expected marginal gain in revenue as well in advertiser payment.
We leverage recent advances in scalable estimation of expected influence spread and devise scalable algorithms for revenue maximization in our model.

\enlargethispage{\baselineskip}
\spara{Contributions and roadmap.}
\squishlist
\item We propose \emph{incentivized} social advertising, and formulate a fundamental problem of revenue maximization from the host perspective, when the incentives paid to the seed users are determined by their demonstrated past influence in the topic of the specific ad (Section~\ref{sec:problem}).

\eat{
\note[Cigdem]{do we need to say above that the incentives are proportional to past influence? Our theory and algorithms are generic, they do not use any information regarding being "proportional" to past influence, so they are applicable to any kind of cost or payment function (as long as payment function is submodular.)}
}

\item We prove the hardness of our problem and we devise two greedy algorithms with approximation guarantees. The first (\CARM) is agnostic to users' incentives during the seed selection while the other (\CSRM) is not (Section~\ref{sec:theory}). 

\item We devise scalable versions of our approximation algorithms (Section~\ref{sec:algorithms}). Our comprehensive experimentation on real-world datasets (Section~\ref{sec:experiments}) confirms the scalability of our methods and shows that the scalable version of \CSRM consistently outperforms that of \CARM, and is far superior to natural baselines, thanks to a mindful allocation of budget on incentives. 

\squishend
Related work is discussed in Section~\ref{sec:related} while Section~\ref{sec:conclusions} concludes the paper discussing future work.

\vspace{-4mm}

\LL{\section{Problem statement}
\label{sec:problem}
}

\spara{Business model: the advertiser.} An \emph{advertiser}\footnote{\scriptsize We assume each advertiser has one ad to promote per time window, and use $i$ to refer to the $i$-th advertiser and its ad interchangeably.} $i$ enters into an agreement with the \emph{host}, the owner of the social networking platform, for an incentivized social advertising campaign on its ad. The advertiser agrees to pay the host:
\squishdesc
\item[1.] an incentive $c_i(u)$ for each seed user $u$ chosen to endorse ad $i$; we let $S_i$ denote the set of users selected to endorse ad $i$;
\item[2.] a cost-per-engagement amount $\cpe{i}$ for each user that engages with (e.g., clicks) its ad $i$.
\squishend
An advertiser $i$ has a finite budget $B_i$ that limits the amount it can spend on the campaign for its ad.
\eat{
Let $\sigma_i(S_i)$ denote the \emph{number of clicks} received by the ad $i$, when using $S_i$ as the seed set of incentivized users.
The total payment advertiser $i$ needs to make for its campaign, denoted  $\rho_i(S_i)$, is the sum of its total costs for the ad-engagements (e.g., clicks), and for incentivizing its seed users: i.e., $\rho_i(S_i) = \pi_i(S_i) + c_i(S_i)$ where $\pi_i(S_i) = \cpe{i} \cdot \sigma_i(S_i)$ and $c_i(S_i) := \sum_{u \in S_i} c_i(u)$. }

\spara{Business model: the host.} 
The host receives from advertiser $i$:

\squishdesc
\item[1.] a description of the ad $i$ (e.g., a set of keywords) which allows the host to map the ad to a distribution $\vec{\gamma_i}$ over a latent topic space (described in more detail later);
\item[2.] a commercial agreement that specifies the cost-per-engagement amount $\cpe{i}$ and the campaign budget $B_i$.
\squishend

The host is in charge of running the campaign, by selecting which users and how many to allocate as a seed set $S_i$ for each ad $i$, and by determining their incentives.
Given that these decisions must be taken \emph{before} the campaign is started, the host has to reason in terms of \emph{expectations} based on past performance.
Let $\sigma_i(S_i)$ denote the \emph{expected number of clicks} ad $i$ receives when using $S_i$ as the seed set of incentivized users. The host models
the total payment that advertiser $i$ needs to make for its campaign, denoted  $\rho_i(S_i)$, as the sum of its total costs for the \emph{expected}  ad-engagements (e.g., clicks), and for incentivizing its seed users: i.e., $\rho_i(S_i) = \pi_i(S_i) + c_i(S_i)$ where $\pi_i(S_i) = \cpe{i} \cdot \sigma_i(S_i)$ and $c_i(S_i) := \sum_{u \in S_i} c_i(u)$, where $c_i(u)$ denotes the incentive paid to a candidate seed user $u$ for ad $i$.
%
We assume $c_i(u)$ is a monotone function $f$ of the influence potential of $u$, capturing the intuition that seeds with higher expected spread cost more: i.e., $c_i(u) := f(\sigma_i(\{u\}))$.

Notice that the expected revenue of the host from the engagements to ad $i$ is just  $\pi_i(S_i)$, as the cost $c_i(S_i)$ paid by the advertiser to the host for the incentivizing influential users, is in turn paid by the host to the seeds.
In this setting, the host faces the following trade-off in trying to maximize its revenue. Intuitively, targeting influential seeds would increase the expected number of clicks, which in turn could yield a higher revenue. However, influential seeds cost more to incentivize. Since the advertiser has a fixed overall budget for its campaign, the higher seeding cost may come at the expense of reduced revenue for the host.
Finally, an added challenge is that the host has to serve many advertisers at the same time, with potentially competitive ads, i.e., ads which are very close in the topic space.

\spara{Data model, topic model, and propagation model.} The host, owns:
a \emph{directed graph} $G=(V,E)$ representing the social network, where an arc $(u,v)$ means that user $v$ follows user $u$, and thus $v$ can see $u$'s posts and may be influenced by $u$. The host also owns a \emph{topic model} for ads and users' interests, defined  by a hidden variable $Z$ that can range over $L$ latent topics. A topic distribution thus abstracts the interest pattern of a user and the relevance of an ad to those interests. More precisely, the topic model  maps each ad $i$ to a distribution $\vec{\gamma_i}$ over the latent topic space:
$$\gamma_i^z = \Pr(Z=z|i), \mbox{ with } \sum_{z = 1}^L\gamma_i^z = 1.$$

Finally, the host uses a topic-aware influence propagation model defined on the social graph $G$ and the topic model.
The propagation model governs the way in which ad impressions propagate in the social network, driven by topic-specific influence.
In this work, we adopt the \emph{Topic-aware Independent Cascade} model\footnote{\scriptsize
Note that the use of the topic-based model is orthogonal to the technical development and contributions of our work. Specifically, if we assume that the topic distributions of all ads and users are identical, the TIC model reduces to the standard IC model. The techniques and results in the paper remain intact.
} (TIC) proposed by Barbieri et al.~\cite{BarbieriBM12} which extends the standard \emph{Independent Cascade} (IC) model~\cite{kempe03}: In TIC, an ad is represented by a topic distribution, and the influence strength from user $u$ to $v$ is also topic-dependent, i.e., there is a probability $p_{u,v}^z$ for each topic $z$.
In this model, when a node $u$ clicks an ad $i$, it gets one chance of influencing each of its out-neighbors $v$ that has not clicked $i$. This event succeeds with a probability equal to the weighted average of the arc probabilities w.r.t.\ the topic distribution of ad $i$:
\begin{equation}\label{eq:tic}
	p^i_{u,v} = \sum\nolimits_{z = 1}^L\gamma_i^z \cdot p_{u,v}^z.
\end{equation}
Using this stochastic propagation model the host can determine the \emph{expected spread} $\sigma_i(S_i)$ of a given campaign for ad $i$ when using $S_i$ as seed set.
For instance, the influence value of a user $u$ for ad $i$ is defined as the expected spread of the singleton seed $\{u\}$ for the given the description for ad $i$,  under the TIC model, i.e., $\sigma_i(\{u\})$: this is the quantity that is used to determine the incentive for a candidate seed user $u$ to endorse the ad $i$.

\spara{The revenue maximization problem.}
Hereafter we assume a fixed time window (say a 24-hour period) in which the revenue maximization problem is defined.
Within this time window we have $h$ advertisers with ad description  $\vec{\gamma_i}$, cost-per-engagement $\cpe{i}$, and budget $B_i$, $i \in [h]$. We define an \emph{allocation} $\vecall$ as a vector of $h$ \emph{pairwise disjoint} sets $(S_1, \cdots, S_h) \in 2^V \times \cdots \times 2^V$, where $S_i$ is the seed set assigned to advertiser $i$ to start the ad-engagement propagation process. Within the time window, each user in the platform can be selected to be seed for at most one ad, that is, $S_i \cap S_j = \emptyset$, $i,j \in [h]$. We denote the total revenue of the host from advertisers as the sum of the  ad-specific revenues:
$$
 \pi(\vecall) =  \sum_{i \in [h]} \pi_i(S_i).
$$

\noindent Next, we formally define the revenue maximization problem for incentivized social advertising from the host perspective. Note that given an instance of the TIC model on a social graph $G$, for each ad $i$, the ad-specific influence probabilities are determined by Eq.~(\ref{eq:tic}).

\begin{problem}[\RMLong(\RM)]\label{pr:revMax}
Given a social graph $G=(V,E)$, $h$ advertisers, cost-per-engagement $\cpe{i}$ and budget $B_i$, $i \in [h]$, ad-specific influence probabilities $p^i_{u,v}$  and seed user incentive costs $c_i(u)$, $u, v\in V$, $i \in [h]$, find a feasible allocation $\vecall$ that maximizes the host's revenue:
\begin{equation*}
\begin{aligned}
& \underset{\vecall}{\text{\em maximize}}
& & \pi(\vecall) \\
& \text{\em subject to}
& & \rho_i(S_i) \le B_i, \forall i \in [h],  \\
&&& S_i \cap S_j = \emptyset , i \neq j, \forall i,j \in [h].
\end{aligned}
\end{equation*}
\end{problem}

In order to avoid degenerate problem instances, we assume that no single user incentive 
exceeds any advertiser's budget.
This ensures that every advertiser can afford at least one seed node.

\section{Hardness and Approximation}
\label{sec:theory}
{\bf Hardness}. We first show that Problem \ref{pr:revMax} (\RM) is NP-hard. We recall that a set function $f:2^U\ra \posreals$ is monotone if for $S\subset T\subseteq U$, $f(S) \le f(T)$. We define the marginal gain of an element $x$ w.r.t. $S\subset U$ as $f(x|S) := f(S\cup\{x\}) - f(S)$. A set function $f$ is submodular if for $S\subset T\subset U$ and $x\in U\setminus T$, $f(x|T) \le f(x|S)$, i.e., the marginal gains diminish with larger sets.

It is well known that the influence spread function $\sigma_i(\cdot)$ is monotone and submodular \cite{kempe03}, from which it follows that the  ad-specific revenue function $\pi_i(\cdot)$ is monotone and submodular. Finally, since the total revenue function, $\pi(\vecall) = \sum_{i \in [h]} \pi_i(S_i)$, is a non-negative linear combination of monotone and submodular functions, these properties carry over to $\pi(\vecall)$. Likewise, for each ad $i$, the payment function $\rho_i(\cdot)$ is a non-negative linear combination of two monotone and submodular functions, $\pi_i(\cdot)$ and $c_i(\cdot)$, and so is also monotone and submodular. Thus, the constraints $\rho_i(S_i) \le B_i$, $i\in[h]$, in Problem~\ref{pr:revMax} are submodular knapsack constraints. We start with our hardness result.

\begin{theorem}\label{claim:npHard}
Problem \ref{pr:revMax} (\RM) is \NPhard.
\end{theorem}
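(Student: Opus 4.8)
The plan is to show that \RM is already \NPhard on the restricted family of instances with a \emph{single} advertiser ($h=1$), which makes the disjointness (partition-matroid) constraint vacuous; thus all the hardness will stem from a single submodular-knapsack constraint together with the submodular revenue objective. Concretely, I would reduce from the classical \textsc{Influence Maximization} problem (IM) under the Independent Cascade model, which Kempe \emph{et al.}~\cite{kempe03} proved \NPhard (indeed already for deterministic propagation probabilities, via a reduction from \textsc{Set Cover}). So it suffices to realize IM -- ``maximize $\sigma(S)$ subject to $|S|\le k$'' -- as a special case of \RM.

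The crux is to turn the knapsack constraint $\rho_1(S_1)\le B_1$ into a plain cardinality constraint $|S_1|\le k$. This is the one real obstacle, because $\rho_1(S_1)=\cpe{1}\cdot\sigma_1(S_1)+c_1(S_1)$ bundles the ad-engagement term $\cpe{1}\,\sigma_1(S_1)$, which varies with $S_1$ and which we cannot control directly, together with the seeding term $c_1(S_1)$. Given an IM instance $(G=(V,E),k)$ with $n=|V|$ and probabilities $p_{u,v}$, I would build the \RM instance on the same graph with $p^1_{u,v}=p_{u,v}$, set $\cpe{1}=1$, assign every node the uniform incentive cost $c_1(u)=n+1$, and put $B_1=(n+1)k+n$. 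Then $\rho_1(S_1)=\sigma_1(S_1)+(n+1)\,|S_1|$, and since $0\le\sigma_1(S_1)\le n$, the inequality $\rho_1(S_1)\le B_1$ holds whenever $|S_1|\le k$ and fails whenever $|S_1|\ge k+1$ (as then $(n+1)|S_1|\ge(n+1)(k+1)=B_1+1>B_1$). Hence the feasible allocations of the \RM instance are precisely the seed sets of size at most $k$. Since the objective is $\pi(\vecall)=\pi_1(S_1)=\cpe{1}\cdot\sigma_1(S_1)=\sigma_1(S_1)$, there is a feasible allocation of revenue at least $\eta$ if and only if there is a seed set of size at most $k$ with expected spread at least $\eta$; i.e., the \RM decision instance is a yes-instance iff the IM instance is.

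Two routine checks finish the argument. First, the construction is clearly polynomial and respects the non-degeneracy assumption that no single incentive exceeds any budget, since $c_1(u)=n+1\le(n+1)k+n=B_1$ for $k\ge1$ (so the advertiser can afford at least one seed). Second, \RM being a maximization problem, \NPhard ness of the optimization version follows from that of the decision version established above. I expect the only delicate point to be the parameter balancing in the second paragraph: the incentive cost $n+1$ must be large enough relative to $\cpe{1}$ and to the maximum possible spread $n$ to render the budget constraint insensitive to the actual value of $\sigma_1(S_1)$, yet small enough not to violate the non-degeneracy assumption -- everything else is bookkeeping. (Alternatively, one can reduce directly from \textsc{Set Cover} using the bipartite set--element graph of~\cite{kempe03} with deterministic arc probabilities, avoiding the black-box appeal to IM but requiring essentially the same balancing.)
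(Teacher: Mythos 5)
Your proposal is correct, but it proves hardness by a genuinely different route than the paper. The paper's proof is a one-liner: restrict to $h=1$, observe that the resulting problem (maximize a monotone submodular function subject to a single submodular knapsack constraint) is exactly the Submodular Cost Submodular Knapsack (SCSK) problem, and invoke the known NP-hardness of SCSK due to Iyer and Bilmes~\cite{iyer2013submodular}. You also restrict to $h=1$, but instead of appealing to SCSK you give a self-contained reduction from \textsc{Influence Maximization} (ultimately \textsc{Set Cover}~\cite{kempe03}): by taking uniform incentives $c_1(u)=n+1$, $\cpe{1}=1$ and $B_1=(n+1)k+n$, the spread term $\sigma_1(S_1)\le n$ is swamped by the seeding term, so the submodular knapsack constraint $\rho_1(S_1)\le B_1$ collapses to the cardinality constraint $|S_1|\le k$, and the objective becomes $\sigma_1(S_1)$; your parameter check ($\rho_1(S_1)\le n+(n+1)k=B_1$ when $|S_1|\le k$, and $\rho_1(S_1)\ge(n+1)(k+1)=B_1+1$ otherwise) is sound, the construction is polynomial, uniform costs are a (weakly) monotone function of influence potential, and the non-degeneracy assumption is respected. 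What each approach buys: the paper's argument is shorter and highlights that \RM subsumes an already-studied hard problem (SCSK), which sets up the later comparison with Iyer et al.'s algorithms; your argument is more elementary, avoids the black-box citation, and yields a slightly sharper message --- \RM is NP-hard even with constant incentives, a single advertiser, and a budget that degenerates into a cardinality bound, i.e., the hardness already resides in the influence-spread objective rather than in the submodularity of the payment constraint. The only care needed (which you note yourself) is that for the yes/no correspondence one should use the deterministic \textsc{Set Cover}-style instances of~\cite{kempe03}, since for general probabilities evaluating $\sigma_1$ exactly is \#P-hard; with that caveat your proof stands.
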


\begin{proof}
Consider the special case with one advertiser, i.e., $h = 1$. Then we have one submodular knapsack constraint and no partition matroid constraint. This corresponds to maximizing a submodular function subject to a submodular knapsack constraint, the so-called Submodular Cost Submodular Knapsack (SCSK) problem, which is known to be NP-hard~\cite{iyer2013submodular}. Since this is a special case of Problem \ref{pr:revMax}, the claim follows.
\end{proof}

Next, we characterize the constraint that the allocation $\vecall = (S_1, \cdots, S_h)$ should be composed of pairwise disjoint sets, i.e., $S_i \cap S_j = \emptyset , i \neq j, \forall i,j \in [h]$. We will make use of the following notions on matroids.
\enlargethispage{2\baselineskip}

\begin{definition}[Independence System]\label{def:IS}
A set system $(\mathcal{E},\mathcal{I})$ defined with a finite ground set $\mathcal{E}$ of elements, and a family $\mathcal{I}$ of subsets of $\mathcal{E}$ is an independence system if $\calI$ is non-empty and if it satisfies downward closure axiom, i.e.,
$X \in \mathcal{I} \wedge Y \subseteq X \rightarrow Y \in \mathcal{I}$.

\end{definition}

\begin{definition}[Matroid]\label{def:Matroid}
An independence system $(\mathcal{E}, \mathcal{I})$ is a matroid $\mathfrak{M} = (\mathcal{E}, \mathcal{I})$ if it also satisfies the augmentation axiom: i.e.,
$X \in \mathcal{I} \wedge Y \in \mathcal{I} \wedge |Y| > |X| \rightarrow \exists e \in Y \setminus X : X \cup \{e\} \in \mathcal{I}$.
\end{definition}

\begin{definition}[Partition Matroid]\label{def:PartitionMatroid}
Let $\mathcal{E}_1, \cdots, \mathcal{E}_l$ be a partition of the ground set $\mathcal{E}$ into $l$ non-empty disjoint subsets. Let $d_i$ be an integer, $0 \le d_i \le |\mathcal{E}_i|$. In a partition matroid $\mathfrak{M} = (\mathcal{E}, \mathcal{I})$, a set $X$ is defined to be independent iff, for every $i$, $1\le i\le l$, $|X \cap \mathcal{E}_i| \le d_i$. That is, $\mathcal{I} = \{X \subseteq \mathcal{E} : |X \cap \mathcal{E}_i| \le d_i, \forall i = 1,\cdots,l \}$.
\end{definition}

\begin{lemma}\label{lem:matroid}
The constraint that in an allocation $\vecall = (S_1, \cdots, S_h)$, the seed sets $S_i$ are pairwise disjoint is a partition matroid constraint over the ground set $\mathcal{E}$ of all (node, advertiser) pairs.
\end{lemma}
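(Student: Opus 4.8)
The plan is to exhibit an explicit correspondence between allocations and subsets of a ground set, and then to identify the pairwise-disjointness requirement with the independence condition of Definition~\ref{def:PartitionMatroid}. I would take the ground set to be $\mathcal{E} = V \times [h]$, so that each element $(u,i)$ records the decision ``node $u$ is a seed for ad $i$''. Every allocation $\vecall = (S_1,\dots,S_h)$ is then encoded by the set $X_{\vecall} = \{(u,i) : u \in S_i\} \subseteq \mathcal{E}$, and conversely every $X \subseteq \mathcal{E}$ decodes to the tuple whose $i$-th component is $\{u : (u,i) \in X\}$; these two maps are mutually inverse, so allocations are in bijection with subsets of $\mathcal{E}$. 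Stating this bijection cleanly is what makes ``the constraint is a partition matroid constraint'' a precise claim rather than an informal one.

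The crucial modeling choice is how to partition $\mathcal{E}$: not by advertiser, but by node. For each $u \in V$ set $\mathcal{E}_u = \{u\} \times [h] = \{(u,1),\dots,(u,h)\}$. Since $h \ge 1$, each $\mathcal{E}_u$ is non-empty; the $\mathcal{E}_u$ are pairwise disjoint; and $\bigcup_{u \in V} \mathcal{E}_u = \mathcal{E}$, so $\{\mathcal{E}_u\}_{u\in V}$ is a partition of the ground set in the sense of Definition~\ref{def:PartitionMatroid}. Give each block capacity $d_u = 1$. Then for any allocation $\vecall$ and any node $u$ we have $|X_{\vecall} \cap \mathcal{E}_u| = |\{i \in [h] : u \in S_i\}|$, i.e., the number of seed sets containing $u$. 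Hence $|X_{\vecall} \cap \mathcal{E}_u| \le 1$ for all $u$ if and only if no node belongs to two seed sets, which is exactly $S_i \cap S_j = \emptyset$ for all $i \ne j$. Thus the feasible (pairwise-disjoint) allocations are precisely the images of the independent sets $\mathcal{I} = \{X \subseteq \mathcal{E} : |X \cap \mathcal{E}_u| \le 1 \ \forall u \in V\}$ of the partition matroid $\mathfrak{M} = (\mathcal{E}, \mathcal{I})$ with blocks $\{\mathcal{E}_u\}_{u\in V}$ and capacities $d_u = 1$.

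To finish, I would invoke (or verify in one line) that $(\mathcal{E}, \mathcal{I})$ so defined is indeed a matroid, which is the standard fact that Definition~\ref{def:PartitionMatroid} names: $\mathcal{I}$ is non-empty (it contains $\emptyset$) and downward closed, since deleting elements only decreases each $|X \cap \mathcal{E}_u|$; and the augmentation axiom holds because if $|Y| > |X|$ then some block satisfies $|Y \cap \mathcal{E}_u| > |X \cap \mathcal{E}_u|$, and adding any element of $(Y \setminus X) \cap \mathcal{E}_u$ to $X$ keeps every block-intersection of size at most one.

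There is no serious obstacle here; the lemma is essentially a matter of setting up the right encoding. The only thing to get right is the modeling decision — using $(\text{node}, \text{advertiser})$ pairs as the ground set and partitioning by node rather than by advertiser — after which the disjointness constraint translates verbatim into a uniform capacity-one partition matroid constraint, and everything else is bookkeeping.
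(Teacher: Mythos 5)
Your proposal is correct and follows essentially the same route as the paper's proof: the same ground set $\mathcal{E}$ of (node, advertiser) pairs, the same partition into blocks $\mathcal{E}_u$ indexed by nodes with capacity $d_u = 1$, and the same observation that pairwise disjointness of the $S_i$ corresponds exactly to $|X \cap \mathcal{E}_u| \le 1$ for all $u$. Your explicit verification of the matroid axioms is a detail the paper leaves implicit, but it does not change the argument.
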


\begin{proof}
Given $G=(V,E)$, $\Card{V} = n$, and a set $A = \{i : i \in [h] \}$ of advertisers, let $\mathcal{E} = V \times A$ denote the ground set of all $(node, advertiser)$ pairs. Define $\mathcal{E}_u = \{ (u,i) : i \in A \}$, $u \in V$. Then the set $\{\mathcal{E}_u : \forall u \in  V \}$ forms a partition of $\mathcal{E}$ into $n$ disjoint sets, i.e., $\mathcal{E}_u \cap \mathcal{E}_v = \emptyset$, $u \ne v$, and $\bigcup_{u \in V} \mathcal{E}_u = \mathcal{E}$. Given a subset $\mathcal{X} \subseteq \mathcal{E}$, define
\begin{align*}
S_i = \{u : (u,i) \in \mathcal{X}\}.
\end{align*}
Then it is easy to see that the sets $S_i, i\in [h]$ are pairwise disjoint iff the set $\mathcal{X}$ satisfies the constraint
\begin{align*}
\mathcal{X} \cap \mathcal{E}_u \le 1, \forall u\in V.
\end{align*}

The lemma follows on noting that the set system $\mathfrak{M} = (\mathcal{E}, \mathcal{I})$, where $\mathcal{I} = \{\mathcal{X} \subseteq \mathcal{E} : |\mathcal{X}\cap\mathcal{E}_u| \le 1, \forall u \in V\}$ is actually a partition matroid.
\end{proof}

 Therefore, the \RM problem corresponds to the problem of submodular function maximization subject to a partition matroid constraint $\mathfrak{M} = (\mathcal{E}, \mathcal{I})$, \emph{and} $h$ submodular knapsack constraints.

\spara{Approximation analysis.}
Next lemma states that the constraints of the \RM problem together form an independence system defined on the ground set $\mathcal{E}$. This property will be leveraged later in developing approximation algorithms.
Given the partition matroid constraint $\mathfrak{M} = (\mathcal{E}, \mathcal{I})$, and $h$ submodular knapsack constraints, let $\mathcal{C}$ denote the family of subsets, defined on $\mathcal{E}$, that are feasible solutions to the \RM problem.

\begin{lemma}\label{lem:IS}
The system $(\calE, \calC)$ is an independence system.
\end{lemma}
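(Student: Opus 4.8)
The plan is to verify the two defining axioms of an independence system directly from Definition~\ref{def:IS}: non-emptiness of $\calC$ and downward closure. Recall that $\calC$ is the family of all $\mathcal{X} \subseteq \calE$ that are feasible for \RM, i.e., those $\mathcal{X}$ for which the induced seed sets $S_i = \{u : (u,i)\in\mathcal{X}\}$ satisfy both the partition matroid constraint $|\mathcal{X}\cap\calE_u| \le 1$ for all $u\in V$, and the $h$ submodular knapsack constraints $\rho_i(S_i)\le B_i$ for all $i\in[h]$.

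First I would establish non-emptiness: the empty set $\emptyset \subseteq \calE$ trivially satisfies $|\emptyset \cap \calE_u| = 0 \le 1$ for every $u$, and induces $S_i = \emptyset$ for every $i$, so $\rho_i(\emptyset) = \pi_i(\emptyset) + c_i(\emptyset) = 0 \le B_i$ (using that the empty seed set yields zero expected spread and zero incentive cost, and $B_i \ge 0$). Hence $\emptyset \in \calC$ and $\calC$ is non-empty.

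Next I would prove downward closure: suppose $\mathcal{X} \in \calC$ and $\mathcal{Y} \subseteq \mathcal{X}$; I must show $\mathcal{Y} \in \calC$. For the matroid part, $\mathcal{Y}\cap\calE_u \subseteq \mathcal{X}\cap\calE_u$ gives $|\mathcal{Y}\cap\calE_u| \le |\mathcal{X}\cap\calE_u| \le 1$ for each $u$. For the knapsack part, let $S_i' = \{u : (u,i)\in\mathcal{Y}\}$ and $S_i = \{u : (u,i)\in\mathcal{X}\}$; since $\mathcal{Y}\subseteq\mathcal{X}$ we get $S_i' \subseteq S_i$, and because $\rho_i$ is monotone (established earlier in this section as a non-negative linear combination of the monotone functions $\pi_i$ and $c_i$), we have $\rho_i(S_i') \le \rho_i(S_i) \le B_i$. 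Thus every constraint defining $\calC$ is inherited by $\mathcal{Y}$, so $\mathcal{Y}\in\calC$, completing the verification.

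This argument is essentially routine; the only subtlety worth flagging is that it relies crucially on \emph{monotonicity} of each payment function $\rho_i$ — an arbitrary knapsack-type constraint with a non-monotone cost need not be downward closed — and on the non-degeneracy assumption being unnecessary here (we only need $B_i\ge 0$, which holds since $\rho_i\ge 0$ and at least $\emptyset$ is feasible). There is no real obstacle; the lemma is a bookkeeping step that packages the constraint structure of \RM into the independence-system framework so that greedy-style approximation guarantees can later be invoked.
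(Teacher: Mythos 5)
Your proof is correct and follows essentially the same route as the paper's: both establish downward closure of $\calC$ by noting that the partition matroid part is inherited by subsets and that each knapsack constraint $\rho_i(S_i)\le B_i$ is downward closed because $\rho_i$ is monotone (a point the paper asserts implicitly via the independence systems $(V,\mathcal{F}_i)$, which you make explicit), with your added non-emptiness check being a trivial but harmless extra.
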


\begin{proof}
 For each knapsack constraint $\rho_i(\cdot) \le B_i$, let $\mathcal{F}_i \subseteq 2^V$ denote the collection of feasible subsets of $V$, i.e.,
\begin{align*}
\mathcal{F}_i = \{ S_i \subseteq V : \rho_i(S_i) \le B_i \}.
\end{align*}
The set system $(V,\mathcal{F}_i)$ defined by the set of feasible solutions to any knapsack constraint is downward-closed, hence is an independence system. Given $\mathcal{F}_i$, $\forall i \in [h]$ and the partition matroid constraint $\mathfrak{M} = (\mathcal{E}, \mathcal{I})$, we can define the family of subsets of $\mathcal{E}$ that are feasible solutions to the \RM problem as follows:
\begin{align*}
\mathcal{C} = \left\{ \mathcal{X} : \mathcal{X} \in \mathcal{I} \text{ and } S_i \in \mathcal{F}_i, \forall i \in [h]  \right\}
\end{align*}
where $S_i = \{ u : (u,i) \in \mathcal{X} \}$. Let $\mathcal{X} \in \mathcal{C}$ and $\mathcal{X}' \subseteq \mathcal{X}$. In order to show that $\mathcal{C}$ is an independence system, it suffices to show that $\calX'\in \calC$.

Let $S_i' = \{ u : (u, i) \in \mathcal{X}' \} $, $i  \in [h]$.
Clearly, $S_i' \subseteq S_i$. As each single knapsack constraint $\rho_i(\cdot) \le B_i$ is associated with the independence system $(V, \mathcal{F}_i)$, we have $S_i' \in \mathcal{F}_i$ for any $S_i' \subseteq S_i$, $i \in [h]$.

Next, as $\mathcal{X} \in \mathcal{I}$, we have $S_i \cap S_j = \emptyset$.
Since $\mathfrak{M} = (\calE, \calI)$ is a partition matroid, by downward closure, $\calX' \in \calI$, and hence $S_i'\cap S_j' = \emptyset$, $i \ne j$.
\eat{
Given the one-to-one correspondence between $S_i'$ and $\mathcal{X}'$, and $\mathcal{X}' \in \mathcal{I}$ due to the downward closure property of the partition matroid $\mathfrak{M}$.
}
We just proved $\calX' \in \calC$, verifying that $\calC$ is an independence system.

\end{proof}
Our theoretical guarantees for our approximation algorithms to the \RM problem depend on the notion of \emph{curvature} of submodular functions. Recall that $f(j|S)$, $j\not\in S$, denotes the marginal gain $f(S\cup\{j\}) - f(S)$.
\begin{definition}[Curvature]\cite{conforti1984submodular}
Given a submodular function $f$, the total curvature $\kappa_f$ of $f$ is defined as
\begin{align*}
\kappa_f = 1 - \underset{j \in V} {\min} \dfrac{f(j|V\setminus\{j\})}{f(\{j\})},
\end{align*}
and the curvature $\kappa_f(S)$ of $f$ wrt a set $S$ is defined as
\begin{align*}
\kappa_f(S) = 1 - \underset{j \in S} {\min} \dfrac{f(j|S\setminus\{j\})}{f(\{j\})}.
\end{align*}
\end{definition}
It is easy to see that $0 \le \kappa_f = \kappa_f(V) \le 1$. Intuitively, the curvature of a function measures the deviation of $f$ from \emph{modularity}: modular functions have a curvature of $0$, and the further away $f$ is from modularity, the larger $\kappa_f$ is. Similarly, the curvature $\kappa_f(S)$ of $f$ \emph{wrt} a set $S$ reflects how much the marginal gains $f(j \mid S)$ can decrease as a function of $S$, measuring the deviation from modularity, given the context $S$.
Iyer et al.~\cite{iyer2015submodularthesis} introduced the notion of \emph{average} curvature $\hat{\kappa}_f(S)$ of $f$ wrt a set $S$ as
\begin{align*}
\hat{\kappa}_f(S) = 1 - \dfrac{\sum_{j \in S} f(j|S \setminus\{j\} )}{\sum_{j \in S} f(\{j\})},
\end{align*}
and showed the following relation between these several forms of curvature:
\begin{align*}
 0 \le \hat{\kappa}_f(S) \le \kappa_f(S)  \le \kappa_f(V) = \kappa_f \le 1.
\end{align*}
In the next subsections, we propose two greedy approximation algorithms for the \RM problem. The first of these, Cost-Agnostic Greedy Algorithm (\CARM), greedily chooses the seed users solely based on the marginal gain in the revenue, without considering seed user incentive costs. The second, Cost-Sensitive Greedy Algorithm (\CSRM), greedily chooses the seed users based on the \emph{rate} of marginal gain in revenue per marginal gain in the advertiser's payment for each advertiser.

We note that Iyer et al. \cite{iyer2013submodular, iyer2015submodularthesis} study a restricted special case of the \RM problem, referred as Submodular-Cost Submodular-Knapsack (SCSK), and propose similar cost-agnostic and cost-sensitive algorithms. Our results extend theirs in two major ways. First, we extend from a single advertiser to multiple advertisers (i.e., from
a single submodular knapsack constraint to multiple submodular knapsack constraints). Second, unlike SCSK, our \RM problem is  subject to an additional partition matroid constraint on the ads-to-seeds allocation, which naturally arises when multiple advertisers are present.

\enlargethispage{2\baselineskip}
\subsection{Cost-Agnostic Greedy Algorithm}
The Cost-Agnostic Greedy Algorithm (\CARM) for the \RM problem, whose pseudocode is provided in Algorithm~\ref{alg:CA-EARM}, chooses at each iteration a (node, advertiser) pair that provides the maximum increase in the revenue of the host.
Let $\mathcal{X}_g \subseteq \mathcal{E}$ denote the greedy solution set of (node,advertiser) pairs, returned by \CARM,  having one-to-one correspondence with the greedy allocation $\vecalloc$, i.e., $S_i = \{u : (u,i) \in \mathcal{X}_g \}$, $\forall S_i \in \vecalloc$. Let $\mathcal{X}_g^t$ denote the greedy solution after $t$ iterations of \CARM. At each iteration $t$, \CARM first finds the (node,advertiser) pair $(u^*,i^*)$ that maximizes  $\pi_i(u \mid S^{t-1}_i)$, and tests whether adding this pair to the current greedy solution $\mathcal{X}_g^{t-1}$ would violate any constraint: if $\mathcal{X}_g^{t-1} \cup \{(u^*,i^*)\}$ is feasible, the pair $(u^*,i^*)$ is added to the greedy solution as the $t$-th (node,advertiser) pair. Otherwise, $(u^*,i^*)$ is removed from the current ground set of (node,advertiser) pairs $\mathcal{E}^{t-1}$. \CARM terminates when there is no feasible (node,advertiser) pair left in the current ground set $\mathcal{E}^{t-1}$.

\begin{observation}\label{obs:curvatureTotalPi}
Being monotone and submodular, the total revenue function $\pi(\vecalloc)$ has a total curvature $\kappa_{\pi}$, given by: 
\begin{align*}
\kappa_{\pi} = 1 - \underset{(u,i) \in \mathcal{E}} {\min} \dfrac{\pi_i(u \mid V \setminus\{u\})}{\pi_i(\{u\})}.
\end{align*}
\end{observation}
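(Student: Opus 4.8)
The plan is to instantiate the general definition of total curvature (the Curvature definition above) for the particular set function $\pi$ on the ground set $\mathcal{E}=V\times A$ of (node, advertiser) pairs, and then to collapse the resulting expression using the additive decomposition $\pi(\vecalloc)=\sum_{i\in[h]}\pi_i(S_i)$ together with the one-to-one correspondence between subsets $\mathcal{X}\subseteq\mathcal{E}$ and allocations, $S_i=\{u:(u,i)\in\mathcal{X}\}$, from Lemma~\ref{lem:matroid}.

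First I would record that $\pi$, viewed as a function on $2^{\mathcal{E}}$, is monotone and submodular: adding a pair $(v,j)$ to $\mathcal{X}$ leaves every $S_i$ with $i\ne j$ unchanged and enlarges $S_j$ by $\{v\}$, so the marginal gain of $(v,j)$ w.r.t.\ $\mathcal{X}$ equals $\pi_j(v\mid S_j)$, which is non-negative and diminishes in $\mathcal{X}$ because $\pi_j(\cdot)$ has these properties (as established before Theorem~\ref{claim:npHard}). Hence $\kappa_\pi$ is well defined, and the definition of total curvature applied to $\pi$ with ground set $\mathcal{E}$ reads
\begin{align*}
\kappa_\pi = 1 - \min_{(u,i)\in\mathcal{E}} \frac{\pi\big((u,i)\mid \mathcal{E}\setminus\{(u,i)\}\big)}{\pi(\{(u,i)\})}.
\end{align*}

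It then remains to evaluate the numerator and denominator. For the denominator, the singleton $\{(u,i)\}$ corresponds to the allocation with $S_i=\{u\}$ and $S_j=\emptyset$ for $j\ne i$; since $\pi_j(\emptyset)=\cpe{j}\cdot\sigma_j(\emptyset)=0$, this gives $\pi(\{(u,i)\})=\pi_i(\{u\})$. For the numerator, the full ground set $\mathcal{E}$ corresponds to the allocation $S_j=V$ for all $j$ (feasibility is irrelevant, as curvature is a property of the function on its ground set), while $\mathcal{E}\setminus\{(u,i)\}$ corresponds to $S_i=V\setminus\{u\}$ and $S_j=V$ for $j\ne i$. Subtracting, all terms $\pi_j(V)$ with $j\ne i$ cancel, leaving $\pi\big((u,i)\mid \mathcal{E}\setminus\{(u,i)\}\big)=\pi_i(V)-\pi_i(V\setminus\{u\})=\pi_i(u\mid V\setminus\{u\})$. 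Substituting both identities into the displayed formula yields exactly the claimed expression for $\kappa_\pi$.

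There is no genuine obstacle here; the statement is a direct unfolding of definitions, so the plan above is essentially the proof. The only points needing explicit mention are that curvature is defined relative to the \emph{entire} ground set $\mathcal{E}$ — so the context set in the numerator is $\mathcal{E}\setminus\{(u,i)\}$, i.e.\ the (infeasible) allocation assigning every node to every advertiser — and that $\pi_j(\emptyset)=0$ is what collapses the singleton term to the single-advertiser revenue $\pi_i(\{u\})$.
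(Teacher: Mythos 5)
Your proof is correct and follows essentially the same route as the paper: both view $\pi$ as a monotone submodular function $g$ on the ground set $\mathcal{E}$ of (node, advertiser) pairs via the correspondence $S_i=\{u:(u,i)\in\mathcal{X}\}$ and then rewrite the generic curvature expression. You merely make explicit the two evaluations the paper leaves implicit ($g(\{(u,i)\})=\pi_i(\{u\})$ since $\pi_j(\emptyset)=0$, and $g((u,i)\mid\mathcal{E}\setminus\{(u,i)\})=\pi_i(u\mid V\setminus\{u\})$ by cancellation of the $j\ne i$ terms), which is a welcome clarification rather than a different argument.
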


\begin{proof}
Let $g : 2^{\mathcal{E}} \mapsto \mathbb{R}_{\ge 0}$ be monotone and submodular. 
Then, the total curvature $\kappa_g$ of $g$ is defined as follows:
\begin{align*}
\kappa_g = 1 - \underset{x \in \mathcal{E}} {\min} \dfrac{g(x \mid \mathcal{E} \setminus\{x\} )}{g(\{x\})},
\end{align*}
where $x = (u,i) \in \mathcal{E}$. Using the one-to-one correspondence between $\mathcal{X}_g$ and $\vecalloc$, we can alternatively formulate the \RM problem as follows:

\begin{equation*}
\begin{aligned}
& \underset{\mathcal{X} \subseteq \mathcal{E}}{\text{maximize}}
& & g(\mathcal{X}) \\
& \text{subject to}
& & \mathcal{X} \in \mathcal{C}.  \\
\end{aligned}
\end{equation*}
where $g(\mathcal{X}) = \sum_{i \in [h]} \pi_i(S_i)$ with $S_i = \{ u : (u,i) \in \mathcal{X} \}$. \\

Using this correspondence, we can rewrite $\kappa_g$ as $\kappa_{\pi}$ as follows:
\begin{align*}
\kappa_g = \kappa_{\pi} = 1 - \underset{(u,i) \in \mathcal{E}} {\min} \dfrac{\pi_i(\{u\} \mid V \setminus\{u\})}{\pi_i(\{u\})}.
\end{align*}
\end{proof}
We will make use of the following notions in our results on approximation guarantees.
\begin{definition}[Upper and lower rank]
Let $({\cal E}, {\cal C})$ be an independence system. Its \emph{upper rank} $R$ and \emph{lower rank} $r$ are defined as the cardinalities of the smallest and largest maximal independent sets:
$$r = \min \{|X| : X \in \mathcal{C} \text{ and } X \cup \{(u,i)\} \not\in \mathcal{C}, ~\forall (u,i) \not\in X\}, $$
$$R = \max \{|X| : X \in \mathcal{C} \text{ and } X \cup \{(u,i)\} \not\in \mathcal{C}, ~\forall (u,i) \not\in X\}.$$
\end{definition}

When the independence system is a matroid, $r=R$, as all maximal independent sets have the same cardinality.
\begin{theorem}\label{theo:CARM}
\CARM achieves an approximation guarantee of $\dfrac{1}{\kappa_{\pi}}\left[ 1- \left(\dfrac{R - \kappa_{\pi}}{R}\right)^{r} \right]$ to the optimum, where $\kappa_{\pi}$ is the total curvature of the total revenue function $\pi(\cdot)$, $r$ and $R$ are respectively the lower and upper rank of $(\mathcal{E}, \mathcal{C})$. \LL{This bound is tight}.
\eat{
, \emph{i.e.}, the cardinalities of the smallest and largest maximal independent sets in $\mathcal{C}$:
$$r = \min \{|X| : X \in \mathcal{C} \text{ and } X \cup \{(u,i)\} \not\in \mathcal{C}, ~\forall (u,i) \not\in X\}, $$
$$R = \max \{|X| : X \in \mathcal{C} \text{ and } X \cup \{(u,i)\} \not\in \mathcal{C}, ~\forall (u,i) \not\in X\}.$$
}
\end{theorem}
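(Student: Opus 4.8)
The plan is to reduce the analysis of \CARM to the known greedy-maximization bound for monotone submodular functions over an independence system, in the form due to Conforti and Cornuéjols and refined by Iyer et al., which states precisely that a greedy algorithm that repeatedly adds the feasible element of maximum marginal gain achieves a ratio of $\frac{1}{\kappa_g}\bigl[1 - (\frac{R-\kappa_g}{R})^r\bigr]$, where $r$ and $R$ are the lower and upper rank of the independence system. By Lemma~\ref{lem:IS}, the feasible family $(\calE, \calC)$ of the \RM problem is an independence system, so this template applies once we identify the objective with the function $g(\mathcal{X}) = \sum_{i\in[h]} \pi_i(S_i)$ on $2^{\calE}$ from Observation~\ref{obs:curvatureTotalPi}. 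The first step, then, is to set up this correspondence carefully: \CARM as written picks at each step the $(u^*,i^*)$ maximizing $\pi_{i}(u \mid S^{t-1}_i)$, and I would argue that this is exactly $g(x \mid \mathcal{X}_g^{t-1})$ for $x=(u,i)$, since $g$ decomposes additively across advertisers and adding $(u,i)$ only changes the $i$-th summand. Hence \CARM is literally the greedy algorithm for $g$ over $(\calE,\calC)$, and its curvature parameter is $\kappa_g = \kappa_\pi$ by Observation~\ref{obs:curvatureTotalPi}.

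Next I would invoke the generic greedy bound to conclude the approximation ratio $\frac{1}{\kappa_\pi}\bigl[1-(\frac{R-\kappa_\pi}{R})^r\bigr]$, where $r,R$ are the lower and upper rank of $(\calE,\calC)$ as defined just above the theorem. The one subtlety to check here is a termination/feasibility detail: when the maximum-marginal-gain pair $(u^*,i^*)$ is infeasible, \CARM discards it from the ground set rather than halting; I would note that because $(\calE,\calC)$ is an independence system (downward closed), discarding infeasible elements and continuing is exactly the behaviour assumed in the generic analysis, and the algorithm terminates at a maximal independent set of $\calC$, whose size lies between $r$ and $R$. This is what licenses plugging $r$ and $R$ into the bound. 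Because monotonicity guarantees $g$ never decreases, the standard telescoping/inductive argument comparing the greedy value after $t$ steps to $(1-\kappa_\pi/R)$-type contractions toward the optimum goes through verbatim; I would cite Iyer et al.~\cite{iyer2015submodularthesis} (and Conforti–Cornuéjols \cite{conforti1984submodular}) for this rather than reprove it.

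Finally, for the tightness claim ``This bound is tight'', the plan is to exhibit an instance — or point to the one in \cite{iyer2015submodularthesis, conforti1984submodular} — on which greedy achieves exactly this ratio; the natural construction is a single-advertiser instance ($h=1$, so the partition matroid is vacuous) with a carefully chosen modular-plus-curvature cost structure realizing a uniform matroid of rank $r=R$, which already forces the $\frac{1}{\kappa}[1-(1-\kappa/R)^R]$ worst case for submodular greedy. Since $h=1$ is a special case of \RM, tightness transfers. The main obstacle, I expect, is not any single hard step but the bookkeeping needed to make the reduction airtight: verifying that \CARM's element-wise greedy choice on $\calE$ coincides with greedy on $g$ under the independence system $\calC$ (including the discard-on-infeasibility rule), and confirming that the hypotheses of the cited generic theorem — monotone submodular objective on $2^{\calE}$, arbitrary independence-system constraint — are met exactly as stated, so that $r$ and $R$ in our theorem are the same quantities appearing in that theorem.
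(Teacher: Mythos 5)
The first half of your proposal — reducing \CARM to greedy maximization of the monotone submodular function $g$ of Observation~\ref{obs:curvatureTotalPi} over the independence system $(\calE,\calC)$ of Lemma~\ref{lem:IS}, and then invoking the Conforti--Cornu\'ejols bound with $\kappa_g=\kappa_\pi$ and the lower/upper ranks $r,R$ — is exactly the paper's argument for the approximation ratio, including the observation that discarding infeasible pairs and continuing is the behaviour covered by the generic analysis.

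The gap is in the tightness claim. You propose to ``point to'' the worst-case instances of \cite{conforti1984submodular,iyer2015submodularthesis} and argue that, since $h=1$ is a special case of \RM, ``tightness transfers.'' It does not: those lower-bound constructions are abstract monotone submodular functions paired with abstract independence systems, whereas a tight instance for Theorem~\ref{theo:CARM} must be realizable \emph{inside the \RM model} — the objective has to be $\cpe{i}\cdot\sigma_i(\cdot)$ for an influence-spread function under TIC/IC on some concrete graph, and the constraint has to be the submodular knapsack $\rho_i(S_i)=\pi_i(S_i)+c_i(S_i)\le B_i$, whose cost is coupled to the objective itself (plus the partition matroid when $h>1$). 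Influence-spread functions and budget constraints of this coupled form are a restricted subclass, so there is no a priori guarantee that the generic hard instances (in particular your uniform-matroid, $r=R$ construction) can be embedded; this is precisely why the paper states that tightness does \emph{not} directly follow from \cite{conforti1984submodular} and instead exhibits an explicit network instance (Figure~\ref{fig:tightness}): one advertiser, deterministic influence probabilities, $cpe=1$, budget $B=7$, for which $r=1$, $R=2$, $\kappa_\pi=1$, the optimum $T=\{a,c\}$ earns revenue $6$, while \CARM may pick $b$ first and get stuck at revenue $3=\frac{1}{2}\cdot 6$, matching the bound exactly. To close your argument you would need such a concrete construction (or a proof that the abstract hard instances are realizable as \RM instances), not just the citation.
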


\begin{proof}
We note that the family $\mathcal{C}$ of subsets that constitute \emph{feasible} solutions to the \RM problem form an independence system defined on $\mathcal{E}$ (Lemma~\ref{lem:IS}). Given this, the approximation guarantee of \CARM directly follows from the result of Conforti et al.~\cite[Theorem 5.4]{conforti1984submodular} for submodular function maximization subject to an independence system constraint. \LL{However, the tightness does \emph{not} directly follow from the tightness result in \cite{conforti1984submodular}, which we address next.}

\LL{We now exhibit an instance to show that the bound is tight. Consider one advertiser, i.e., $h = 1$. The network is shown in Figure~\ref{fig:tightness}, where all influence probabilities are $1$. The incentive costs for nodes are as shown in the figure, while $cpe(.) = 1$. The budget is $B=7$.  It is easy to see that the lower rank is $r = 1$, corresponding to the maximal feasible seed set $S = \{b\}$, while the upper rank is $R = 2$, e.g., corresponding to  maximal feasible seed sets such as $T = \{a, c\}$. Furthermore, the total curvature is $\kappa_{\pi} = 1$. On this instance, the optimal solution is $T$ which achieves a revenue of $6$. In its first iteration, \CARM could choose $b$ as a seed. Once it does, it is forced to the solution $S = \{b\}$ as no more seeds can be added to $S$. The revenue of \CARM is $3 = \dfrac{1}{\kappa_{\pi}}\left[ 1- \left(\dfrac{R - \kappa_{\pi}}{R}\right)^{r} \right]OPT = \frac{1}{2} \cdot 6$. }
 \end{proof}

\begin{figure}[t!]
\vspace{-4mm}
\centering{\includegraphics[width=.39\textwidth, height=.12\textheight]{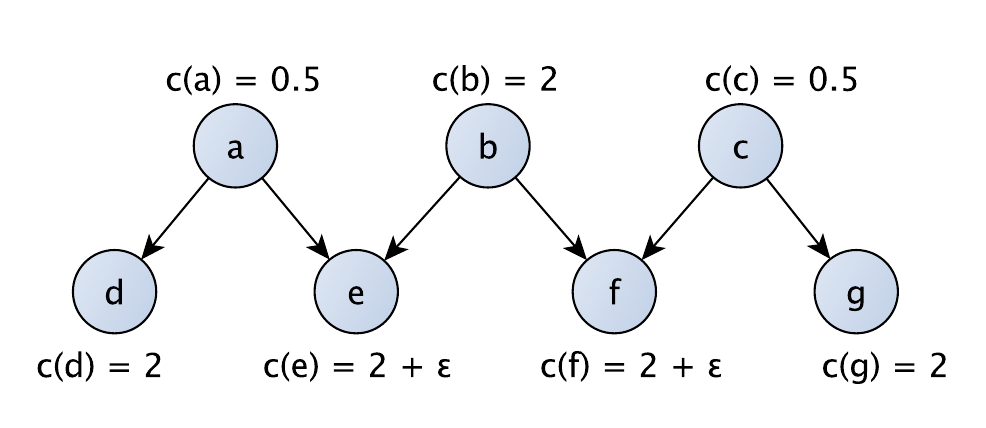}}
\vspace{-4mm}
\caption{Instance illustrating tightness of bound in Theorem~\ref{theo:CARM}.}
\label{fig:tightness}
\vspace{2mm}
\end{figure}

\noindent \LL{\textbf{Discussion.} We next discuss the significance and the meaning of the bound in Theorem~\ref{theo:CARM}. 
Notice that when there is just one advertiser, TIC reduces to IC. Even for this simple setting, the bound on \CARM is tight.
By a simple rearrangement of the terms, we have:
$$\dfrac{1}{\kappa_{\pi}}\left[ 1- \left(\dfrac{R - \kappa_{\pi}}{R}\right)^{r} \right] \ge \dfrac{1}{\kappa_{\pi}} \left( 1 - e^{ - \kappa_{\pi} \cdot \dfrac{r}{R}} \right).
$$
Clearly, the cost-agnostic approximation bound improves as $\dfrac{r}{R}$ approaches $1$, achieving the best possible value when $r = R$. As a special case, the cost-agnostic approximation further improves when the independence system $(\mathcal{E}, \mathcal{C})$ is a matroid since for a matroid $r = R$ always holds: e.g., consider the standard IM problem~\cite{kempe03} which corresponds to submodular function maximization subject to a uniform matroid. Here, $\pi(\cdot) = \sigma(\cdot)$.
Then the 
approximation guarantee becomes $\dfrac{1}{\kappa_{\pi}} \left( 1 - e^{ - \kappa_{\pi}} \right)$, providing a slight improvement over the usual $(1 - 1/e)$-approximation, thanks to the curvature term $\kappa_{\pi}$.\footnote{\small \LL{Note that $\kappa_{\pi} \le 1$ always. Hence, the extent of improvement increases as the total curvature $\kappa_{\pi}$ decreases.}} This remark is also valid for budgeted influence maximization~\cite{LeskovecKDD07} with uniform seed costs. }
%
\LL{For more general instances of the problem, the guarantee depends on the characteristics of the instance, specifically, the lower and upper ranks and the curvature. This kind of instance dependent bound is characteristic of submodular function maximization over an independence system~\cite{korte78,conforti1984submodular}. Specifically for the RM problem, given its constraints, the values of $r$ and $R$ are dictated by the values of $h$ payment functions over all feasible allocations. For instance, given our assumption that every advertiser can afford at least one seed, we always have $r \ge h$. The worst-case value $r = h$ corresponds to the case in which each advertiser $i$ is allocated a single seed node $u_i$ whose payment $\rho_i(u_i)$ exhausts its budget $B_i$. Similarly for $R$, without using any particular assumption on $B_i$, $\forall i \in [h]$, we always have $R \le \text{min}(n, \sum_{i \in [h]} \left\lfloor B_i / cpe(i) \right\rfloor)$.}
\CA{Notice also that:
\begin{align}
\dfrac{1}{\kappa_{\pi}}\left[ 1- \left(\dfrac{R - \kappa_{\pi}}{R}\right)^{r} \right] &= \dfrac{1}{\kappa_{\pi}}\left[ 1- \left(1 - \dfrac{ \kappa_{\pi}}{R}\right)^{r} \right] \\
\ge  \dfrac{1}{\kappa_{\pi}} \left[ 1- \left(1 - \dfrac{ \kappa_{\pi}}{R}\right) \right] &= \dfrac{1}{\kappa_{\pi}} \dfrac{ \kappa_{\pi}}{R} = \dfrac{1}{R}
\end{align}
Hence, the worst-case approximation is always bounded by $1/R$.}


\begin{algorithm}[!h!]
 \caption{\CARM}
\label{alg:CA-EARM}
{\small
\SetKwInOut{Input}{Input}
\SetKwInOut{Output}{Output}
\SetKwComment{tcp}{//}{}
\Input{$G=(V,E)$, $B_i$, $\cpe{i}$, $\vec{\gamma}_i, \forall i \in [h]$, $c_i(u), \forall i \in [h], \forall u \in V$}
 \Output{$\vecalloc = (S_1, \cdots, S_h)$}
} 
{\small
$t \leftarrow 1$, $\mathcal{E}^0 \leftarrow \mathcal{E}$, $\mathcal{X}_g^0 \leftarrow \emptyset$ \\
$S^0_i \leftarrow \emptyset$, $\forall i \in [h]$ \\
}
\While{$\mathcal{E}^{t-1} \neq \emptyset$} {
$(u^*,i^*) \leftarrow \argmax_{(u,i) \in \mathcal{E}^{t-1}}  \pi_i(u \mid S^{t-1}_i)$ \\ \label{line:greedyCriteria}
\uIf{$(\mathcal{X}_g^{t-1} \cup \{(u^*,i^*)\}) \in \mathcal{C}$} {
			$S_{i^*}^{t} \leftarrow S_{i^*}^{t-1} \cup \{ u^*\} $ \\
			$S_j^t \leftarrow S_j^{t-1}$, $\forall j \neq i^*$ \\
			$\mathcal{X}_g^{t} \leftarrow \mathcal{X}_g^{t-1} \cup \{(u^*, i^*)\}$ \\
			$\mathcal{E}^t \leftarrow \mathcal{E}^{t-1} \setminus \{(u^*, i^*)\}$ \\
			$t \leftarrow t + 1$ \\
}
\uElse{
$\mathcal{E}^{t-1} \leftarrow \mathcal{E}^{t-1} \setminus \{(u^*, i^*)\} $ \\
}
}
$S_i \leftarrow S^{t-1}_i$, $\forall i \in [h]$ \\
{\bf return} $\vecalloc = (S_1, \cdots, S_h)$
\end{algorithm}

\vspace{2mm} \subsection{Cost-Sensitive Greedy Algorithm}
The Cost-sensitive greedy algorithm (\CSRM) for the \RM problem is similar to \CARM. The main difference is that at each iteration $t$, \CSRM first finds the (node,advertiser) pair $(u^*,i^*)$ that maximizes $\dfrac{\pi_i(u \mid S^{t-1}_i)}{\rho_i(u \mid S^{t-1}_i)} $, and tests whether the addition of this pair to the current greedy solution set $\mathcal{X}_g^{t-1}$ would violate any matroid or knapsack independence constraint: if the addition is feasible, 
the pair $(u^*,i^*)$ is added to the greedy solution as the $t$-th (node,advertiser) pair. Otherwise, $(u^*,i^*)$ is removed from the current ground set 
$\mathcal{E}^{t-1}$.
\CSRM terminates when there is no (node,advertiser) pair left in the current ground set $\mathcal{E}^{t-1}$. \CSRM can be obtained by simply replacing Line~\ref{line:greedyCriteria} of Algorithm~\ref{alg:CA-EARM} with
$$ (u^*,i^*) \leftarrow \underset{(u,i) \in \mathcal{E}^{t-1}} \argmax \dfrac{\pi_i(u \mid S^{t-1}_i)}{\rho_i(u \mid S^{t-1}_i)}. $$


\begin{theorem}\label{theo:cs-earm1}
\CSRM achieves an approximation guarantee of
$$1 - \dfrac{R \cdot \rho_{max} }{R \cdot \rho_{max} + (1 - \underset{i \in [h]} {\text{max }} {\kappa_{\rho_{i}}}) \cdot \rho_{min}} $$
 to the optimum where $R$ is the upper rank of $(\mathcal{E}, \mathcal{C})$, $\kappa_{\rho_{i}}$ is the total curvature of $\rho_i(\cdot)$, $\forall i \in [h]$,  $\rho_{max} := \underset{(u,i) \in \mathcal{E}} {\text{max }} \rho_{i}(u) $ and $\rho_{min} := \underset{(u,i) \in \mathcal{E}} {\text{min }} \rho_{i}(u) $ are respectively the maximum and minimum singleton payments over all (node, advertiser) pairs.
\end{theorem}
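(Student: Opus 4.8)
The plan is to bound $\pi(\mathcal{X}^*)-\pi(\mathcal{X}_g)$ from above, where $\mathcal{X}_g\subseteq\mathcal{E}$ is the \CSRM solution with seed sets $S_1,\dots,S_h$ and $\mathcal{X}^*\subseteq\mathcal{E}$ is an optimal solution with seed sets $O_1,\dots,O_h$. Writing $g(\mathcal{X}):=\sum_{i\in[h]}\pi_i(S_i(\mathcal{X}))$ for the monotone submodular total revenue on $2^{\mathcal{E}}$, monotonicity and submodularity give
$$\pi(\mathcal{X}^*)-\pi(\mathcal{X}_g)\;\le\;g(\mathcal{X}^*\cup\mathcal{X}_g)-g(\mathcal{X}_g)\;\le\;\sum_{(v,j)\in\mathcal{X}^*\setminus\mathcal{X}_g}\pi_j(v\mid S_j).$$
Since $\mathcal{X}^*\in\mathcal{C}$ and $(\mathcal{E},\mathcal{C})$ is an independence system (Lemma~\ref{lem:IS}), $\mathcal{X}^*$ extends to a maximal independent set, so the sum has at most $|\mathcal{X}^*|\le R$ terms. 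Hence it suffices to prove the per-pair estimate
$$\pi_j(v\mid S_j)\;\le\;\frac{\rho_{max}}{\bigl(1-\max_{i\in[h]}\kappa_{\rho_i}\bigr)\rho_{min}}\;\pi(\mathcal{X}_g)\qquad\text{for every }(v,j)\in\mathcal{X}^*\setminus\mathcal{X}_g;$$
summing over the $\le R$ pairs and rearranging then yields exactly $\pi(\mathcal{X}_g)\ge\bigl(1-\tfrac{R\rho_{max}}{R\rho_{max}+(1-\max_i\kappa_{\rho_i})\rho_{min}}\bigr)\pi(\mathcal{X}^*)$. (I would first assume $\rho_{min}>0$, discarding the useless pairs $(u,i)$ with $\rho_i(\{u\})=0$ since they carry zero revenue; and if $\pi(\mathcal{X}_g)=0$ the first display forces $\pi(\mathcal{X}^*)=0$ and the claim is trivial.)

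To prove the per-pair estimate I would run a ``blame'' argument driven by the greedy bang-per-buck rule. Fix $(v,j)\in\mathcal{X}^*\setminus\mathcal{X}_g$. Since \CSRM runs until the ground set is empty, $(v,j)$ was at some iteration selected as the maximum-ratio pair and \emph{discarded}, i.e.\ rejected against the partial greedy solution $Y$ (seed sets $T_1,\dots,T_h$) current at that time, because $Y\cup\{(v,j)\}\notin\mathcal{C}$. As $(v,j)\in\mathcal{X}^*$ we have $\rho_j(\{v\})\le\rho_j(O_j)\le B_j$, so $\{(v,j)\}$ is by itself feasible; hence $Y\ne\emptyset$ and the violation is caused either \textbf{(a)} by the partition matroid, i.e.\ $v\in T_k$ for some (unique) $k\ne j$, or \textbf{(b)} by advertiser $j$'s knapsack, i.e.\ $\rho_j(T_j\cup\{v\})>B_j$, which forces $T_j\ne\emptyset$. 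In case (a) let $x=(v,k)$; in case (b) let $x=(u,j)$ be the last pair \CSRM added to $T_j$. In either case $x$ was added at some earlier iteration $t$ against a partial solution whose seed sets satisfy $S^{t-1}_j\subseteq T_j\subseteq S_j$, and — this is the delicate point — the pair $(v,j)$ was still a live candidate in the ground set at iteration $t$, so $x$ was picked over $(v,j)$ there. Writing $(w,a):=x$, submodularity of $\pi_j$, the greedy rule at iteration $t$, and submodularity of $\rho_j$ (which gives $\rho_j(v\mid S^{t-1}_j)\le\rho_j(\{v\})\le\rho_{max}$) yield
$$\pi_j(v\mid S_j)\;\le\;\pi_j(v\mid S^{t-1}_j)\;\le\;\frac{\pi_a(w\mid S^{t-1}_a)}{\rho_a(w\mid S^{t-1}_a)}\;\rho_{max}$$
(if $\rho_j(v\mid S^{t-1}_j)=0$ then $\pi_j(v\mid S^{t-1}_j)=0$ and the estimate is immediate).

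To finish, I would control the denominator via the curvature of $\rho_a$: submodularity and the definition of total curvature give $\rho_a(w\mid S^{t-1}_a)\ge\rho_a(w\mid V\setminus\{w\})\ge(1-\kappa_{\rho_a})\rho_a(\{w\})\ge\bigl(1-\max_{i\in[h]}\kappa_{\rho_i}\bigr)\rho_{min}$, while $\pi_a(w\mid S^{t-1}_a)$ is one of the nonnegative greedy marginal gains that sum to $\pi(\mathcal{X}_g)$, hence $\le\pi(\mathcal{X}_g)$. Substituting into the last display gives the per-pair estimate, and with it the theorem. The hard part is the bookkeeping of the middle paragraph: one must show that \emph{every} discarded optimal pair can be charged to a single, already-selected greedy pair of at least as large a bang-per-buck ratio, correctly separating the matroid-violation and knapsack-violation cases (these are exactly the two constraint families that distinguish \RM from the single-knapsack SCSK problem of Iyer et al.), and one must verify that at the charging iteration the optimal pair had not yet been removed from the ground set. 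Everything else is routine submodularity/curvature manipulation, together with the minor boundary case of a maximum-ratio pair with zero marginal payment, which is absorbed by the $\rho_{min}>0$ reduction.
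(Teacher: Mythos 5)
Your proof is correct, and it reaches the paper's pivotal inequality by a charging scheme that genuinely differs from the paper's. The paper never asks \emph{why} an optimal pair became infeasible: it partitions $\mathcal{X}^*$ by the time at which each pair was tested, into the sets $\mathcal{U}^*_t = \mathcal{X}^*\cap(U^t\setminus U^{t-1})$, and charges every pair in $\mathcal{U}^*_t$ to the $t$-th successfully added greedy pair, using exactly the liveness fact you flag as delicate. You instead charge each discarded optimal pair to the greedy pair that is to blame for its rejection (the occupant $(v,k)$ of node $v$ in the matroid case, the last pair added to $S_j$ in the knapsack case) and verify liveness at that earlier iteration; this is sound, since a pair leaves the ground set only in the pass where it is selected and rejected, which is strictly later than the pass in which the blamed pair was added, and your singleton-feasibility step $\rho_j(\{v\})\le\rho_j(S_j^*)\le B_j$ is what rules out an empty culprit. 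From there both proofs use the same ingredients ($\rho_j(v\mid S_j^{t-1})\le\rho_{max}$, the curvature bound $\rho_a(w\mid S_a^{t-1})\ge(1-\max_{i\in[h]}\kappa_{\rho_i})\rho_{min}$, and $|\mathcal{X}^*|\le R$) and arrive at the identical final inequality (Eq.~\ref{eq:garantiFinal}); your per-pair bound $\pi_a(w\mid S_a^{t-1})\le\pi(\mathcal{X}_g)$ plays the role of the paper's $\min_t$ denominator and spares you its product-of-sums relaxation, at the price of the explicit matroid/knapsack case analysis. Two minor remarks: the $\pi(\mathcal{X}_g)=0$ special case is unnecessary, since the rearrangement requires no division by $\pi(\mathcal{X}_g)$; and your blame argument is more flexible than you exploit --- every singleton-feasible pair is still live when the \emph{first} greedy pair is added, so charging all discarded pairs there would put singleton payments in the denominator and give a curvature-free guarantee $\rho_{min}/(\rho_{min}+R\,\rho_{max})$, showing the curvature factor enters only because (as in the paper) the charge is made at later iterations where marginal payments appear.
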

\begin{proof}
We use $\vec{S^*} = (S_1^*, ..., S_h^*)$ and $\vecalloc = (S_1, ..., S_h)$ to denote the optimal and greedy allocations respectively, and $\calX^*$ and $\calX_g$ to denote the corresponding solution sets. Specifically, $S_i^* = \{u : (u,i) \in \mathcal{X}^*\}$, and $S_i = \{u : (u,i) \in \mathcal{X}_g \}$. We denote by $\calX_g^t$ the result of the greedy solution after $t$ iterations. Let $K = |\mathcal{X}_g|$ denote the size of the greedy solution. Thus, $\calX_g = \calX_g^K$. By submodularity and monotonicity:
\begin{small}
$$\pi(\vec{S^*}) \le \pi(\vecalloc)  + \sum_{(u,i) \in \mathcal{X}^* \setminus \mathcal{X}_g} \pi_i(u \mid S_i) \le \pi(\vecalloc)  + \sum_{(u,i) \in \mathcal{X}^*} \pi_i(u \mid S_i).$$
\end{small}
At each iteration $t$, the greedy algorithm first finds the (node, advertiser) pair $(u^*,i^*) \leftarrow \underset{(u,i) \in \mathcal{E}^{t-1}} \argmax \dfrac{\pi_i(u \mid S^{t-1}_i)}{\rho_i(u \mid S^{t-1}_i)}$, and tests whether the addition of this pair to the current greedy solution set $\mathcal{X}_g^{t-1}$ would violate any independence constraint. If $(u^*,i^*)$ is feasible, i.e., if $\mathcal{X}_g^{t-1} \cup \{(u^*,i^*)\} \in \mathcal{C}$, then the pair $(u^*,i^*)$ is added to the greedy solution as the $t$-th (node, advertiser) pair; otherwise, $(u^*,i^*)$ is removed from the current ground set $\mathcal{E}^{t-1}$. In what follows, for clarity, we use the notation $(u_t, i_t)$ to denote the (node, advertiser) pair that is \emph{successfully} added by the greedy algorithm to $\mathcal{X}_g^{t-1}$ in iteration $t$.

Let $U^{t}$ denote the set of (node, advertiser) pairs that the greedy algorithm \emph{tested} for possible addition to the greedy solution in the first $(t+1)$ iterations \emph{before} the addition of the $(t+1)$-st pair $(u_{t+1}, i_{t+1})$ into $\mathcal{X}_g^{t}$. Thus, $U^t \setminus U^{t-1}$ includes the $t$-th pair $(u_t, i_t)$ that was successfully added to $\mathcal{X}_g^{t-1}$, as well as all the pairs that were tested for addition into $\mathcal{X}_g^{t}$ but failed the independence test. Thus, $\forall (u,i) \in U^t \setminus U^{t-1}$, we have
$\dfrac{\pi_{i}(u \mid S_{i}^{t})}{\rho_{i}(u \mid S_{i}^{t})} \ge \dfrac{\pi_{i_{t+1}}(u_{t+1} \mid S_{i_{t+1}}^{t})}{\rho_{i_{t+1}}(u_{t+1} \mid S_{i_{t+1}}^{t})}$, since they were tested for addition to $\mathcal{X}^t_g$ before $(u_{t+1}, i_{t+1})$, but failed the independence test. For all $(u,i) \in U^t \setminus U^{t-1}$, we have $\dfrac{\pi_{i}(u \mid S_{i}^{t-1})}{\rho_{i}(u \mid S_{i}^{t-1})} \le \dfrac{\pi_{i_t}(u_t \mid S_{i_t}^{t-1})}{\rho_{i_t}(u_t \mid S_{i_t}^{t-1})}$.
since they were not good enough to be added to $\mathcal{X}_g^{t-1}$ as the $t$-th pair. Note that, the greedy algorithm terminates when there is no feasible pair left in the ground set. Hence after $K$ iterations, $\mathcal{E}^{K}$ contains only the \emph{infeasible} pairs that violate some matroid or knapsack constraint. Thus, we have $\mathcal{X}^* = \bigcup_{t=1}^{K} [\mathcal{X}^* \cap (U^t \setminus U^{t-1}) ]$. Let $ \mathcal{U}^*_t := \mathcal{X}^* \cap (U^t \setminus U^{t-1})$. Notice that $\calX^* = \bigcup_{t=1}^K \calU_t^*$. Then, we have:
\begin{align*}
\pi(\vec{S^*}) &\le \pi(\vecalloc)  + \sum_{(u,i) \in \mathcal{X}^*} \pi_i(u \mid S_i) \\
 &= \pi(\vecalloc)  + \sum_{t=1}^{K} \sum_{(u,i) \in \mathcal{U}^*_t} \pi_i(u \mid S_i) \\
&\le \pi(\vecalloc)  + \sum_{t=1}^{K} \sum_{(u,i) \in \mathcal{U}^*_t} \dfrac{\pi_{i_t}(u_t \mid S_{i_t}^{t-1})}{\rho_{i_t}(u_t \mid S_{i_t}^{t-1})} \cdot \rho_i(u \mid S_i^{t-1}).
\end{align*}
The last inequality is due to the fact that
$\forall (u,i) \in \mathcal{U}^*_t$:
\begin{align*}
\pi_i(u \mid S_i) \le \pi_i(u \mid S_{i}^{t-1}) \le \dfrac{\pi_{i_t}(u_t \mid S_{i_t}^{t-1})}{\rho_{i_t}(u_t \mid S_{i_t}^{t-1})} \cdot \rho_{i}(u \mid S_{i}^{t-1}),
\end{align*}
where the first inequality follows from submodularity and the second follows from the greedy choice of (node, advertiser) pairs.
Continuing, we have:
 \begin{align}
 \label{eq:toBeContd}
\pi(\vec{S^*}) &\le \pi(\vecalloc)  + \sum_{t=1}^{K} \sum_{(u,i) \in \mathcal{U}^*_t} \dfrac{\pi_{i_t}(u_t \mid S_{i_t}^{t-1})}{\rho_{i_t}(u_t \mid S_{i_t}^{t-1})} \cdot \rho_i(u \mid S_i^{t-1}) \nonumber \\
&= \pi(\vecalloc)  + \sum_{t=1}^{K} \dfrac{\pi_{i_t}(u_t \mid S_{i_t}^{t-1})}{\rho_{i_t}(u_t \mid S_{i_t}^{t-1})} \sum_{(u,i) \in \mathcal{U}^*_t} \rho_i(u \mid S_i^{t-1})  \nonumber  \\
&\le \pi(\vecalloc)  + \sum_{t=1}^{K} \dfrac{\pi_{i_t}(u_t \mid S_{i_t}^{t-1})}{\rho_{i_t}(u_t \mid S_{i_t}^{t-1})} \cdot \sum_{t=1}^{K} \sum_{(u,i) \in \mathcal{U}^*_t} \rho_i(u) \nonumber \\
&= \pi(\vecalloc)  + \sum_{t=1}^{K} \dfrac{\pi_{i_t}(u_t \mid S_{i_t}^{t-1})}{\rho_{i_t}(u_t \mid S_{i_t}^{t-1})} \cdot  \sum_{(u,i) \in \mathcal{X}^*} \rho_i(u) \nonumber \\
&\le \pi(\vecalloc)  + \pi(\vecalloc) \cdot \dfrac{R \cdot \underset{(u,i) \in \mathcal{X}^*} {\max} \rho_i(u)  }{\underset{t \in [1,K]} {\text{min }} \rho_{i_t}(u_t \mid S_{i_t}^{t-1})}
\end{align}
where the last inequality follows from the fact that $\pi(\vecalloc) = \sum_{t=1}^{K} \pi_{i_t}(u_t \mid S_{i_t}^{t-1})$ and $\Card{\mathcal{X}^*} \le R$ since $\mathcal{X}^* \in \mathcal{C}$. Let $(u_{t_m}, i_{t_m}) := \underset{t \in [1,K]} {\argmin} \rho_{i_t}(u_t \mid S_{i_t}^{t-1})$ and let $(u_{min}, i_{min}) := \underset{(u,i) \in \mathcal{E}} {\argmin} \rho_i(u \mid V \setminus \{u\}) $. Being monotone and submodular, each $\rho_i(\cdot)$ has the total curvature $\kappa_{\rho_i} = 1 - \underset{u \in V} {\min} \dfrac{\rho_i(u \mid V\setminus \{u\})}{\rho_i(u)}$.
Hence, for $\rho_{i_{min}}(\cdot)$, we have:
\begin{align} \label{eq:hedem2}
1 -  \kappa_{\rho_{i_{min}}} &= \underset{u \in V} {\min} \dfrac{\rho_{i_{min}}(u \mid V\setminus \{u\})}{\rho_{i_{min}}(u)} \le \dfrac{\rho_{i_{min}}(u_{min} \mid V \setminus \{u_{min}\})}{\rho_{i_{min}}(u_{min})},
\end{align}
where the inequality above follows from the definition of total curvature. Then, using submodularity and Eq.\ref{eq:hedem2}, we obtain:
\begin{align}
\label{eq:anotherEquation}
\underset{t \in [1,K]} {\text{min }} \rho_{i_t}(u_t \mid S_{i_t}^{t-1}) &= \rho_{i_{t_m}}(u_{t_m} \mid S_{i_{t_m}}^{t_m - 1}) \nonumber \\
&\ge \rho_{i_{t_m}}(u_{t_m} \mid V \setminus \{u_{t_m}\}) \nonumber \\
&\ge \underset{(u,i) \in \mathcal{E}} {\min} \rho_i(u \mid V \setminus \{u\}) \nonumber \\
&= \rho_{i_{min}}(u_{min} \mid V \setminus \{u_{min}\}) \nonumber \\
&\ge (1 -  \kappa_{\rho_{i_{min}}}) \cdot {\rho_{i_{min}}(u_{min})} \nonumber \\
&\ge (1 - \underset{i \in [h]} {\text{max }} {\kappa_{\rho_{i}}}) \cdot  \underset{(u,i) \in \mathcal{E}} {\text{min }} {\rho_{i}(u)}.
\end{align}
Continuing from where we left in Eq.\ref{eq:toBeContd} and using Eq.\ref{eq:anotherEquation}, we have:
\begin{align}
\label{eq:garantiFinal}
\pi(\vec{S^*}) &\le \pi(\vecalloc) + \pi(\vecalloc) \cdot \dfrac{R \cdot \underset{(u,i) \in \mathcal{X}^*} {\max} \rho_i(u)  }{\underset{t \in [1,K]} {\text{min }} \rho_{i_t}(u_t \mid S_{i_t}^{t-1})} \nonumber \\
&\le \pi(\vecalloc) \cdot \left( 1+  \dfrac{R \cdot \underset{(u,i) \in \mathcal{E}} {\max} \rho_i(u) }{ (1 - \underset{i \in [h]} {\text{max }} {\kappa_{\rho_{i}}}) \cdot  \underset{(u,i) \in \mathcal{E}} {\text{min }} {\rho_{i}(u)} }  \right) \nonumber \\
&= \pi(\vecalloc) \cdot \left( 1+  \dfrac{R \cdot \rho_{max} }{ (1 - \underset{i \in [h]} {\text{max }} {\kappa_{\rho_{i}}}) \cdot \rho_{min} }  \right)
\end{align}
Rearranging the terms we obtain:
\begin{align*}
\pi(\vecalloc) &\ge \pi(\vec{S^*}) \cdot \dfrac{  (1 - \underset{i \in [h]} {\text{max }} {\kappa_{\rho_{i}}}) \cdot  \rho_{min}  } {  (1 - \underset{i \in [h]} {\text{max }} {\kappa_{\rho_{i}}}) \cdot \rho_{min} +  R \cdot \rho_{max}} \\
&= \pi(\vec{S^*}) \cdot \left(1 - \dfrac{R \cdot \rho_{max} }{R \cdot \rho_{max} + (1 - \underset{i \in [h]} {\text{max }} {\kappa_{\rho_{i}}}) \cdot \rho_{min}} \right).
\end{align*}
\end{proof}

\noindent \LL{\textbf{Discussion.} We next discuss the significance and the meaning of the bounds. Notice that the value of the cost-sensitive approximation bound improves as the ratio $\dfrac{\rho_{max}}{\rho_{min}}$ decreases, as Eq.~\ref{eq:garantiFinal} shows. Since $\rho_{max} \le \underset{i \in [h]} {\text{min }} B_i$, we can see that as the value of $\rho_{max}$ decreases, intuitively $r$ would increase, for the corresponding maximal independent set of minimum size could pack more seeds under the knapsack constraints. Similarly, if the value of $\rho_{min}$ increases, $R$ would decrease since the corresponding maximal independent set of maximum size could pack fewer seeds under the knapsack constraints. Thus, intuitively as $\dfrac{\rho_{max}}{\rho_{min}}$ decreases, $\dfrac{r}{R}$ would increase. When this happens, both cost-agnostic and cost-sensitive approximations improve.}

\LL{At one extreme, when ${\kappa_{\rho_{i}}} = 0, \forall i \in [h]$, i.e., when $\rho_i(\cdot)$ is modular $\forall i \in [h]$, we have linear knapsack constraints. Thus, Theorem~\ref{theo:CARM} and Theorem~\ref{theo:cs-earm1} respectively provide cost-agnostic and cost-sensitive approximation guarantees for the \emph{Budgeted Influence Maximization} problem~\cite{LeskovecKDD07, nguyen2013budgeted} for the case of multiple advertisers, with an additional matroid constraint.}
%
%
\LL{At the other extreme, when $\underset{i \in [h]} {\text{max }} {\kappa_{\rho_{i}}} = 1$, which is the case for totally normalized and saturated functions (\emph{e.g.}, matroid rank functions), the approximation guarantee of \CSRM is unbounded, i.e., it becomes degenerate. This is similar to the result of \cite{iyer2015submodularthesis} for the SCSK problem whose cost-sensitive approximation guarantee becomes unbounded.
\eat{
due to the average curvature term $\hat{\kappa}_{\rho}(S^*)$ that it employs where $S^*$ is the optimal solution of the SCSK problem.
}
Nevertheless, combining the results of the cost-agnostic and cost-sensitive cases, we can obtain a bounded approximation. }

\LL{On the other hand, while \CARM always has a bounded worst-case guarantee, our experiments show that \CSRM empirically obtains higher revenue\footnote{It remains open whether the approximation bound for \CSRM is tight. Interestingly, on the instance (Fig.~\ref{fig:tightness}) used in the proof of Theorem\ref{theo:CARM}, \CSRM obtains the optimal solution $T = \{a, c\}$.}.
}


%

\section{Scalable Algorithms}
\label{sec:algorithms}
While Algorithms \CARM and \CSRM provide approximation guarantees, their efficient implementation is a challenge, as both of them require a large number of influence spread computations: in each iteration $t$, for each advertiser $i$ and each node $u \in V \setminus S^{t-1}_i$, the algorithms need to compute $\pi_i(u \mid S^{t-1}_i)$ and $\pi_i(u \mid S^{t-1}_i) / \rho_i(u \mid S^{t-1}_i)$, respectively.

Computing the exact influence spread $\sigma(S)$ of a given seed set $S$ under the IC model is \SPhard~\cite{ChenWW10}, and this hardness carries over to the TIC model. In recent years, significant advances have been made in efficiently estimating $\sigma(S)$. A natural question is whether they can be adapted to our setting, an issue we address next.

\eat{A common practice is to use Monte Carlo (MC) simulations~\cite{kempe03}.
However, accurate estimation requires a large number of MC simulations, which is prohibitively expensive and will not scale.
Thus, to make \CARM and \CSRM scalable, we need an alternative approach.}

\subsection{Scalable Influence Spread Estimation}
\eat{
In the influence maximization literature, considerable effort has been devoted to developing scalable approximation algorithms.
Recently, Borgs et al.~\cite{borgs14} 
introduced the idea of sampling \emph{``reverse-reachable''} (RR) sets in the graph for the efficient estimation of influence spread, and proposed a quasi-linear time randomized algorithm.}

Tang et al.~\cite{tang14} 
proposed a near-linear time randomized algorithm for influence maximization, called {\em Two-phase Influence Maximization (TIM)}, building on the notion of \emph{``reverse-reachable''} (RR) sets proposed by Borgs et al.~\cite{borgs14}. Random RR-sets are critical in the efficient estimation of influence spread.
Tang et al.~\cite{tang2015influence} subsequently proposed an algorithm called IMM that improves upon TIM by tightening the lower bound on the number of random RR-sets required to estimate influence with high probability. The difference between TIM and IMM is that the lower bound used by TIM ensures that the number of random RR-sets it uses is sufficient to estimate the spread of \emph{any} seed set of a given size $s$. By contrast, IMM uses a lower bound that is tailored for the seed that is greedily selected by the algorithm. Nguyen et al.~\cite{NguyenTD16}, adapting ideas from TIM~\cite{tang14}, and the sequential sampling design proposed by Dagum \emph{et al.}~\cite{dagum2000optimal}, proposed an algorithm called SSA that provides significant run-time improvement over TIM and IMM.

These algorithms are designed for the basic influence maximization problem and hence require knowing the number of seeds as input. In our problem, the number of seeds is not fixed, but is dynamic and depends on the budget and partition matroid constraints. Thus a direct application of these algorithms is not possible.

Aslay et al.~\cite{AslayLB0L15} recently proposed a technique for efficient seed selection for IM when the number of seeds required is not predetermined but can change dynamically. \CA{However, their technique cannot handle the presence of seed user incentives which, in our setting, directly affects the number of seeds required to solve the RM problem}. In this section, we derive inspiration from their technique. First, though we note that for \CARM, in each iteration, for each advertiser, we need to find a feasible node that yields the maximum marginal gain in revenue, and hence the maximum marginal spread. By contrast, in \CSRM, we need to find the node that yields the maximum \emph{rate} of marginal revenue per marginal gain in payment, i.e., $\pi_i(u \mid S_i^{t-1}) / \rho_i(u \mid S_i^{t-1})$.

To find such node $u_i^{t}$
we must compute $\sigma_i(v | S^{t-1}_i)$, $\forall v : (v,i) \in \mathcal{E}^{t-1}$: notice that node $u_i^{t}$ might even correspond to the node that has the \emph{minimum} marginal gain in influence spread for iteration $t$.
{\sl Thus, any scalable realization of \CSRM should be capable of working as an influence spread oracle that can efficiently compute} $\pi_i(u \mid S_i^{t-1}) / \rho_i(u \mid S_i^{t-1})$ {\sl for all} $u \in \{v : (v,i) \in \mathcal{E}^{t-1}\}$.

Among the state-of-the-art IM algorithms~\cite{tang14, tang2015influence, NguyenTD16}, only TIM~\cite{tang14} can be adapted to serve as an influence oracle. For a given set size $s$, the derivation of the number of random RR-sets that TIM uses is done such that the influence spread of {\sl any set of at most $s$ nodes can be accurately estimated}.
On the other hand, even though IMM~\cite{tang2015influence} and SSA~\cite{NguyenTD16} provide significant run-time improvements over TIM, they inherently cannot perform this estimation task accurately: the sizes of the random RR-sets sample that these algorithms use are tuned just for accurately estimating the influence spread of \emph{only} the approximate greedy solutions; the sample sizes used are inadequate for estimating the spread of arbitrary seed sets of a given size. Thus, we choose to extend TIM to devise scalable realizations of \CARM and \CSRM, namely, \fastca and \fastcs. Next, we describe how to extend the ideas of RR-sets sampling and TIM's sample size determination technique to obtain scalable approximation algorithms for the RM problem: \fastca  and \fastcs.

\enlargethispage{\baselineskip}
\CA{\subsection{Scalable Revenue Maximization}}
For the scalable estimation of influence spread, in this section we devise \fastca and \fastcs, scalable realizations of \CARM and \CSRM, based on the notion of Reverse-Reachable sets~\cite{borgs14} and adapt the sample size determination procedure employed by TIM~\cite{tang14} to achieve a certain estimation accuracy with high confidence.

\spara{Reverse-Reachable (RR) sets~\cite{borgs14}.} Under the IC model, a random RR-set $R$ from $G$ is generated as follows. First, for every edge $(u,v) \in E$, remove it from $G$ w.p.\ $1-p_{u,v}$: this generates a possible world (deterministic graph) $X$. Second, pick a \emph{target} node $w$ uniformly at random from $V$. Then, $R$ consists of the nodes that can reach $w$ in $X$. For a sufficient sample $\RR$ of random RR-sets, the fraction $F_{\RR}(S)$ of $\RR$ covered by $S$ is an unbiased estimator of $\sigma(S)$, i.e., $\sigma(S) = \E[ n \cdot F_\RR(S)]$.

\spara{Sample Size Determination of \CC{TIM}~\cite{tang14}.} \CC{Let $\CN{\RR_i}$ be a collection of $\theta_i$ random RR-sets. Given any seed set size $s_i$ and $\varepsilon > 0$, define $\CC{L_{i}(s_i,\varepsilon)}$ to be:}
\begin{align}\label{eqn:timLB}
\CC{L_{i}(s_i,\varepsilon)} = (8 + 2 \varepsilon) n \cdot \dfrac{\ell \log n + \log \binom{n}{\CN{s_i}} + \log 2}{\CC{OPT_{i,s_i}} \cdot \varepsilon^{2}},
\end{align}
where $\ell > 0, \varepsilon > 0$ and \CC{$OPT_{i,s_i} = \underset{S \subseteq V, |S| \le s_i} {\max~} \sigma_i(S)$}. Let $\theta_i$ be a number no less than $\CC{L_{i}(s_i,\varepsilon)}$. Then, for any seed set $S$ with $\CN{|S| \leq s_i}$, the following inequality holds w.p.\ at least $1 - n^{- \ell} / \binom{n}{\CN{s_i}}$:
\CN{
\begin{align}
\label{eq:Lemma3}
\left| n \cdot F_{\RR_i}(S_i) - \sigma_i(S_i) \right| < \dfrac{\varepsilon}{2} \cdot OPT_{\CC{i,s_i}}.
\end{align}
}


\CN{
\spara{Estimated Payments and Budget Feasibility.\footnote{\footnotesize{We would like to thank to Kai Han and Jing Tang for bringing the budget feasibility issue into our attention, which we address in this section.}}} Let $\vecstilde = (\stilde_1, \cdots, \stilde_h)$ denote the approximately greedy solution that \fastca (resp. \fastcs) returns. Since the algorithm operates on the estimation of influence spread, the revenue and payment computed for each advertiser $i$ will also be estimations of the actual revenue and payment for seed set $\stilde_i$. Let $\spi_i(\stilde_i) =  \cpe{i} \cdot n \cdot \isfrac{\stilde_i}$ and $\srho_i(\stilde_i) = c_i(\stilde_i) + \spi_i(\stilde_i)$ denote the estimated revenue and estimated payment for advertiser $i$, respectively. As \fastca (resp. \fastcs) performs budget feasibility check on the estimated payments, it is possible to encounter scenarios in which $\srho_i(\stilde_i) \le B_i$ while $\rho_i(\stilde_i) > B_i$. Thus, to ensure that the approximate greedy allocation results in actual payments that do not violate any budget constraints with high probability, one could consider to use a refined budget $\rbudget{i} < B_i$, for each advertiser $i$, by taking into account the error introduced by spread estimation. Next, we provide details on how to set $\rbudget{i}$ so that $\stilde_i$ is budget feasible with high probability.}

\CN{First, notice that, following Eq.\ref{eq:Lemma3}, we have $\sigma_i(\stilde_i)  \le n \cdot \isfrac{\stilde_i} + \frac{\varepsilon}{2} \cdot OPT_{\CC{i,s_i}}$. Thus, to ensure that $c_i(\stilde_i) + \cpe{i} \cdot \sigma_i(\stilde_i) \le B_i$, w.h.p., we need to have:  
\begin{align*}
c_i(\stilde_i) + \cpe{i} \cdot \left(n \cdot \isfrac{\stilde_i} + \frac{\varepsilon}{2} \cdot OPT_{\CC{i,s_i}}\right) \le B_i
\end{align*}
which implies that the budget constraint on the estimated payment $\srho_i(\stilde_i)$ should be refined as:
\begin{align}
\srho_i(\stilde_i) \le B_i - \cpe{i} \cdot \frac{\varepsilon}{2} \cdot OPT_{\CC{i,s_i}}.
\end{align}
While using a refined budget of  $B_i - \cpe{i} \cdot \frac{\varepsilon}{2} \cdot OPT_{\CC{i,s_i}}$ would ensure w.h.p. that $\rho_i(\stilde_i) \le  B_i$, such refinement requires to compute $OPT_{\CC{i,s_i}}$ which is unknown and \NPhard to compute. To circumvent this difficulty, one could consider an upper bound $\CC{\etb{i}{s_i}}$ on $OPT_{\CC{i,s_i}}$ so that 
\begin{align*}
\rbudget{i} &= B_i -  \cpe{i} \cdot \frac{\varepsilon}{2} \cdot \CC{\etb{i}{s_i}}\\
&\le B_i -  \cpe{i} \cdot \frac{\varepsilon}{2} \cdot \CC{OPT_{i,s_i}}. 
\end{align*}}

\CC{Following \cite{tang18}, an upper bound $\CC{\etb{i}{s_i}}$ on $\CC{OPT_{i,s_i}}$ can be obtained as follows.}

\begin{lemma}[Restated from Lemma $4.3$~\cite{tang18}]
\label{lemma:ubEta}
Let $\RR_i$ be a sample of $\theta_i$ RR-sets, such that, $\theta_i \ge \CC{L_{i}(s_i,\varepsilon)}$, and let $\CC{\tilde{A}_i} \subseteq V$, $|\CC{\tilde{A}_i}| = s_i$ denote the greedy solution to maximum coverage problem on the sample $\RR_i$. Define $\CC{\etb{i}{s_i}}$ to be:
\begin{align}\label{eq:imUB}
\CC{\etb{i}{s_i}} := \left(\sqrt{\frac{\theta_i \cdot \isfrac{\CC{\tilde{A}_i}}}{1 - 1/e} +\frac{\ln{ n^{\ell}}}{2}} + \sqrt{\frac{\ln{n^{\ell}}}{2}} \right)^2 \cdot \frac{n}{\theta_i}
\end{align}
Then, we have: 
\CC{\begin{align*}
\text{Pr}\left[OPT_{i,s_i}  \le \etb{i}{s_i} \right] \ge 1 - n^{\ell}.
\end{align*}}
\end{lemma}

\CC{Following Lemma~\ref{lemma:ubEta}, for a given seed set size $s_i$, we can define $\rbudget{i}$ for $i$ as: 
\begin{align}
\label{eq:refBudget}
\rbudget{i} = B_i - \cpe{i} \cdot \frac{\varepsilon}{2} \cdot \etb{i}{s_i}. 
\end{align}}

\spara{Latent Seed Set Size Estimation.} \CC{The derivation of the sufficient sample size, depicted in Eq.~\ref{eqn:timLB}, requires the number of seeds as input for each $i$, which is not available for \RM problem. Let $s^*_i = |S^*_i|$ denote the true number of seeds that the optimal allocation would assign to $i$. From the advertisers' budgets, there is no obvious way to determine $s^*_i$ for each $i$. This poses a challenge as the required number of RR-sets ($\theta_i$) for advertiser $i$ depends on $s^*_i$.}

\CC{To circumvent this difficulty, one can use a safe upper bound $\sib{i} = \left\lceil \frac{B_i}{\rho^i_{min}} \right\rceil$ on $s^*_i$, where $\rho^i_{min}$ is the minimum singleton payment for $i$ so that, by using a sample of at least $L_{i}(\sib{i},\varepsilon)$ RR-sets, we can quantify how the approximation guarantee of  \fastca (resp, \fastcs) deteriorate from the guarantee of \CARM (resp., \CSRM) as a function of the estimation accuracy that the sample size ensures for all seed sets of size at most $\sib{i}$ (Eq.\ref{eq:Lemma3}). However, when $\rho_{min}$ is very small w.r.t. $B_i$, a direct application of TIM's sample size derivation technique for $\sib{i}$ seeds could result in a large estimation error $\dfrac{\varepsilon}{2} \cdot OPT_{i,{\sib{i}}}$, due to $\sib{i}$ being a very loose upper bound on $s^*_i$. \CN{Such large estimation error could translate to working with a refined budget $\rbudget{i}$ that is very small w.r.t. $B_i$, resulting in greatly under-utilizing the budget for the sake of budget feasibility.} Now, we explain how to derive a sample size that can estimate the spread of any seed set of size at most $\sib{i}$ while using a more stringent estimation error $\dfrac{\varepsilon}{2} \cdot OPT_{i,\tilde{s}_i}$ with $\tilde{s}_i < \sib{i}$, where $\tilde{s}_i$ is the latent seed set size estimation obtained during the execution of \fastca (resp., \fastcs) as we will explain next.} 


\CC{\begin{lemma}
\label{lemma:newSS}
Let $\CN{\RR_i}$ be a collection of $\theta_i$ random RR-sets. Given $\sib{i}$, $\tilde{s}_i$, and $\varepsilon > 0$, define $\CC{L_{i}(\sib{i}, \tilde{s}_i, \varepsilon)}$ to be:
\begin{align}\label{eqn:timL2}
L_{i}(\sib{i}, \tilde{s}_i, \varepsilon) = (8 \lambda  + 2 \varepsilon) n \cdot \dfrac{\ell \log n + \log \binom{n}{\CN{\sib{i}}} + \log 2}{\CC{OPT_{i,\tilde{s}_i}} \cdot \varepsilon^{2}},
\end{align}
where $\ell > 0, \varepsilon > 0$, \CC{$OPT_{i,s} = \underset{S \subseteq V, |S| \le s} {\max~} \sigma_i(S)$}, for any integer $s$, and $\lambda =\frac{OPT_{i, \sib{i}}}{OPT_{i,\tilde{s}_i}}$. Let $\theta_i$ be a number no less than $L_i(\sib{i}, \tilde{s}_i, \varepsilon) $. Then, for any seed set $S$ with $|S| \leq \sib{i}$, the following inequality holds w.p.\ at least $1 - n^{- \ell} / \binom{n}{\sib{i}}$:
\begin{align}
\label{eq:newSS}
\left| n \cdot F_{\RR_i}(S) - \sigma_i(S) \right| < \dfrac{\varepsilon}{2} \cdot OPT_{i,\tilde{s}_i}.
\end{align}
\end{lemma}}

\CC{\begin{proof}
Let $S$ be any seed set of size at most $\sib{i}$ and let $\tau_i$ denote the probability that $S$ overlaps with a random RR set, i.e., 
$$\tau_i = \E[F_{\RR_i}(S)] = \frac{\sigma_i(S)}{n}.$$
Then, we have:
\begin{align}
\label{eq:newS2}
& \prob{\left| n \cdot F_{\RR_i}(S) - \sigma_i(S) \right| < \dfrac{\varepsilon}{2} \cdot OPT_{i,\tilde{s}_i}} \nonumber \\
&= \prob{\left| \theta_i \cdot F_{\RR_i}(S) - \tau_i \theta_i \right| < \dfrac{\varepsilon \theta_i}{2n} \cdot OPT_{i,\tilde{s}_i}} \nonumber \\
&= \prob{\left| \theta_i \cdot F_{\RR_i}(S) - \tau_i \theta_i \right| < \dfrac{\varepsilon \cdot OPT_{i,\tilde{s}_i}}{2n \tau_i} \cdot \tau_i \theta_i}.
\end{align}
Letting $\delta = \dfrac{\varepsilon \cdot OPT_{i,\tilde{s}_i}}{2n \tau_i}$, by Chernoff bounds, we have:
\begin{align*}
&\text{r.h.s. of Eq.\ref{eq:newS2} } < 2\exp{\left(- \dfrac{\delta^2}{2+\delta} \cdot \tau_i \theta_i  \right)} \\
&= 2\exp{\left(- \dfrac{\varepsilon^2 \cdot OPT^2_{i,\tilde{s}_i}}{8n^2\tau_i + 2\varepsilon n \cdot  OPT_{i,\tilde{s}_i}} \cdot \theta_i \right)} \\
&<  2\exp{\left(- \dfrac{\varepsilon^2 \cdot OPT^2_{i,\tilde{s}_i}}{8n \cdot OPT_{i, \sib{i}} + 2 \varepsilon n \cdot  OPT_{i,\tilde{s}_i}} \cdot \theta_i \right)} \\
&= 2\exp{\left(- \dfrac{\varepsilon^2 \cdot OPT_{i,\tilde{s}_i}}{8n \cdot \frac{OPT_{i, \sib{i}}}{OPT_{i,\tilde{s}_i}} + 2 \varepsilon n} \cdot \theta_i \right)} 
\end{align*}
where the last \emph{inequality} follows from the fact that $\tau_i \le OPT_{i, \sib{i}}$. Finally, we obtain the lower bound on $\theta_i$ by solving 
\begin{align*}
2\exp{\left(- \dfrac{\varepsilon^2 \cdot OPT_{i,\tilde{s}_i}}{8n \frac{OPT_{i, \sib{i}}}{OPT_{i,\tilde{s}_i}} + 2 \varepsilon n} \cdot \theta_i \right)}  \le \dfrac{n^{-\ell}}{\binom{n}{\sib{i}}}.
\end{align*}
\end{proof}}

\CC{An upper bound on the $\lambda$ term required for the sample size derivation in Eq.~\ref{eqn:timL2} can be obtained by using an upper bound on $OPT_{i, \sib{i}}$, as given by Lemma~\ref{lemma:ubEta}, and  a lower bound on $OPT_{i,\tilde{s}_i}$ by using the lower bounding technique provided in \cite{tang14} for TIM's sample size derivation (Eq.~\ref{eqn:timLB}).}

\eat{We now explain the ``latent seed set size estimation" procedure which first makes an initial guess at $s^*_i$, and then iteratively revises the estimated value, until no more seeds are needed, while concurrently selecting seeds and allocating them to advertisers. For ease of exposition, let us first consider a single advertiser $i$. We start with an initial  estimated value for $s^*_i$, denoted by $\tilde{s}_i^1$, and use it to obtain a corresponding sample size \CN{${\theta}_i^1 = L_{i}(\tilde{s}_i^1,\varepsilon)$ using Eq.~\ref{eqn:timLB}}, an upper bound $\CC{\etb{i}{\tilde{s_i}^1}}$ using Eq.~\ref{eq:imUB}, and a refined budget ${\rbudget{i}}^1$ using Eq.~\ref{eq:refBudget}.}

We now explain the ``latent seed set size estimation" procedure which first makes an initial guess at the true number of seeds required to maximize cost-agnostic (cost-sensitive) revenue and then iteratively revises the estimated value, until no more seeds are needed, while concurrently selecting seeds and allocating them to advertisers. For ease of exposition, let us first consider a single advertiser $i$. We start with an initial  estimate, denoted by $\tilde{s}_i^1$, and use it to obtain a corresponding sample size \CN{${\theta}_i^1 = L_{i}(\tilde{s}_i^1,\varepsilon)$ using Eq.~\ref{eqn:timLB}}, an upper bound $\CC{\etb{i}{\tilde{s_i}^1}}$ using Eq.~\ref{eq:imUB}, and a refined budget ${\rbudget{i}}^1$ using Eq.~\ref{eq:refBudget}. As it is \SPhard to compute $\rho^i_{min}$, we also compute in this iteration a safe upper bound $\sib{i}$ from  $$\sib{i} = \left\lceil \frac{B_i}{\srho^i_{min} + \cpe{i} \cdot \frac{\varepsilon}{2} \cdot \etb{i}{\tilde{s}_i}} \right\rceil$$
where $\srho^i_{min} = \underset{u \in V}{\min~} c_i(u) + cpe(i) \cdot n \cdot F_{\RR_i}(u)$. \CN{At iteration $t>1$, we compute the sample size from ${\theta}_i^t  = L_{i}(\sib{i}, \tilde{s}_i^1, \varepsilon)$}, and if ${\theta}_i^t > {\theta}_i^{t-1}$, we will need to sample additional $({\theta}_i^t - {\theta}_i^{t-1})$ RR-sets, and use all RR-sets sampled up to this iteration to select $(\tilde{s}_i^t - \tilde{s}_i^{t-1})$ additional seeds \CN{into the seed set $\stilde_i$ of advertiser $i$}, \CC{while revising the upper bound  $\CC{\etb{i}{\tilde{s_i}^t}}$ and the corresponding refined budget $\rbudget{i}^t$}. After adding those seeds, \CN{if the current payment estimate $\srho_i(\stilde_i)$} is still less than $\rbudget{i}^t$, more seeds can be assigned to advertiser $i$. Thus, we will need another iteration and we further revise our estimation of $s^*_i$. The new value, $\tilde{s}_i^{t+1}$, is obtained as follows:

\eat{The estimation of the latent seed set size required by \fastca and \fastcs can be obtained as follows: for
 Let $s_i$ be the true number of seeds required to maximize the \eat{ cost-agnostic (cost-sensitive)} revenue for advertiser $i$, i.e., $\CC{s_i = |S^*_i|}$. We do not know $s_i$ and we estimate it in successive iterations as $\tilde{s}_i^t$. \CN{Thus, we start with an initial  estimated value for $s_i$, denoted by $\tilde{s_i}^1$, and use it to obtain a corresponding sample size ${\theta}_i^1$ using Eq.~\ref{eqn:timLB}, an upper bound $\CC{\etb{i}{\tilde{s_i}^1}}$ using Eq.~\ref{eq:imUB}, and a refined budget ${\rbudget{i}}^1$ using Eq.~\ref{eq:refBudget}}. If \CN{at iteration t},  ${\theta}_i^t > {\theta}_i^{t-1}$, we will need to sample additional $({\theta}_i^t - {\theta}_i^{t-1})$ RR-sets, and use all RR-sets sampled up to this iteration to select $(\tilde{s}_i^t - \tilde{s}_i^{t-1})$ additional seeds \CN{into the seed set $\stilde_i$ of advertiser $i$}, \CC{while revising the upper bound  $\CC{\etb{i}{\tilde{s_i}^t}}$ and the corresponding refined budget $\rbudget{i}^t$}. After adding those seeds, \CN{if the current payment estimate $\srho_i(\stilde_i)$} is still less than $\rbudget{i}^t$, more seeds can be assigned to advertiser $i$. Thus, we will need another iteration and we further revise our estimation of $s_i$. The new value, $\tilde{s}_i^{t+1}$, is obtained as follows:}

\CN{
\begin{align}\label{eq:latentCA}
\tilde{s}_i^{t+1} \gets \tilde{s}_i^t + \left\lfloor \dfrac{\rbudget{i}^t - \srho_i(\stilde_i)}{c_i^{max} + \cpe{i} \cdot (n \cdot F_{\RR_i}^{max} + \frac{\varepsilon}{2} \cdot \CC{\etb{i}{\tilde{s}_i^t}})}  \right\rfloor
\end{align} \enlargethispage{\baselineskip}
}
where $c_i^{max} := \underset{v \in V} {\max}~ c_i(v)$ is the maximum seed user incentive cost for advertiser $i$, and $F_{\RR_i}^{max} := \underset{u \in V \setminus \stilde_i} {\max}F_{\RR_i}(u)$. This ensures we do not overestimate as future seeds have diminishing marginal gains, thanks to submodularity, and incentives bounded by $c_i^{max}$.



\IncMargin{1em}
\begin{algorithm}[t!]
\caption{\fastcs}
\label{alg:fastCS}
\Indm
{\small
\SetKwInOut{Input}{Input}
\SetKwInOut{Output}{Output}
\SetKwComment{tcp}{//}{}
\Input{$G=(V,E)$, $B_i$, $\cpe{i}$, $\vec{\gamma}_i, \forall i \in [h]$, $c_i(u), \forall i \in [h], \forall u \in V$}
 \Output{\CN{$\vecstilde = (\stilde_1, \ldots, \stilde_h)$}}
}
\Indp
{\small
\ForEach{$j = 1, 2, \ldots, h$} {
$\CN{\stilde_j} \gets \emptyset$; $Q_j \gets \emptyset$; \tcp{\small a priority queue}
\CN{$\tilde{s}_j \gets 1$}; $\theta_j \gets L_{\CC{j}}(\CN{\tilde{s}_j}, \varepsilon)$; $\RR_j \gets \mathsf{Sample}(G, \gamma_j,\theta_j)$\;
\CN{$\sib{j} \gets \left\lceil \frac{B_j}{\srho^j_{min} + \cpe{j} \cdot \frac{\varepsilon}{2} \cdot \etb{j}{\tilde{s}_j}} \right\rceil$}\;
\CN{$\rbudget{j} \gets B_j -  \cpe{i} \cdot \frac{\varepsilon}{2} \cdot \CC{\etb{j}{\tilde{s}_j}}$}\;
$\text{assigned}[u] \leftarrow \text{false}, \forall u \in V$\;
}
\BlankLine
\While{true} {
  \ForEach{$j = 1, 2, \ldots, h$} {
  	  $(v_j, cov_j(v_j)) \gets \mathsf{SelectBestCSNode}(\RR_j)$ (Alg~\ref{alg:rrBestCSNode}) \label{line:greedySelectBest}
 	  $F_{\RR_j}(v_j) \gets cov_j(v_j) / \theta_j $\;
     $\CN{\spi_j(\stilde_j \cup \{v_j\}) \gets \spi_j(\stilde_j) + \cpe{j} \cdot n \cdot F_{\RR_j}(v_j)}$\; 	
  }
  $i \gets \argmax_{j=1}^h \dfrac{\spi_j(v_j | \stilde_j)}{\srho_j(v_j | \stilde_j)}$ subject to: $\srho_j(\stilde_j \cup \{v_j\}) \le \CN{\rbudget{j}} \; \wedge \; \text{assigned}[v_j] = \text{false} $ \; \label{line:greedyCriterCS}
\If{$i \neq \mathbf{NULL}$} {
	$\stilde_i \gets \stilde_i \cup \{v_i\}$\;
	$\text{assigned}[v_i] = \text{true}$\;
         $Q_i.\mathsf{insert}(v_i, cov_i(v_i)) $\;
	$\RR_i \gets \RR_i \setminus \{R \mid v_i \in R \; \wedge \; R \in \RR_i\} $;
}
//{\tt remove RR-sets that are covered}\;
\lElse {
	{\bf return} //{\tt all advertisers exhausted;}
}
\If{$ \left\vert{\stilde_i}\right\vert = \tilde{s}_i$} {
	\CN{$\tilde{s}_i \gets \tilde{s}_i + \left\lfloor \frac{\rbudget{i}  - \srho_i(\stilde_i)}{c_i^{max} + \cpe{i} \cdot (n \cdot F_{\RR_i}^{max} + \frac{\varepsilon}{2} \cdot \etb{i}{\tilde{s}_i})} \right\rfloor$}\;
  	$\RR_i \gets \RR_i \cup \mathsf{Sample}(G, \gamma_i, \max\{0, L_{\CC{i}}(\tilde{s}_i,\varepsilon) - \theta_i \}$\; 	
  	\CC{$\theta_i \gets \max\{L_{\CC{i}}(\CN{\sib{i}}, \tilde{s}_i,\varepsilon), \theta_{i}\}$}\;
  	$\spi_i(\stilde_i) \gets$ $\mathsf{UpdateEstimates}$($\RR_i$, $\theta_i$, $\stilde_i$, $Q_i$)\;
  	\CN{$\rbudget{i} \gets B_i -  \cpe{i} \cdot \frac{\varepsilon}{2} \cdot \CC{\etb{i}{\tilde{s}_i}}$}\;
//{\tt revise estimates to reflect newly added RR-sets}\;
  	$\srho_i(\stilde_i) \gets$ $\spi_i(\stilde_i) + c_i(\stilde_i)$;
}
}
}
\end{algorithm}

\begin{algorithm}[t!]
\caption{UpdateEstimates($\RR_i$, $\theta_i$, $\stilde_i$, $Q_i$)}
\label{alg:rrUpdateEst}
\Indm
{\small
\SetKwInOut{Input}{Input}
\SetKwInOut{Output}{Output}
\SetKwComment{tcp}{//}{}
\Output{$\spi_i(\stilde_i)$}
}
\Indp
{ \small
$\spi_i(\stilde_i) \gets 0 $ \;
\For{$j = 0, \ldots, |\stilde_i| -1$} {
	$(v,cov_i(v)) \gets Q_i[j] $ \;
	$cov_i'(v) \gets \left\vert{\{R \mid v \in R, R \in \RR_i\}}\right\vert $\;
	$Q_i.\mathsf{insert}(v, cov_i(v) + cov_i'(v))$\;
	$\spi_i(\stilde_i) \gets \cpe{i} \cdot n \cdot ((cov_i(v) + cov_i'(v)) / \theta_i) $; //{\tt update coverage of existing seeds w.r.t. new RR-sets added to collection.}
}
}
\end{algorithm}

\begin{algorithm}[t!]
\caption{SelectBestCANode($\RR_j$)}
\label{alg:rrBestCANode}
\Indm
{\small
\SetKwInOut{Input}{Input}
\SetKwInOut{Output}{Output}
\SetKwComment{tcp}{//}{}
\Output{$(u, cov_j(u))$}
}
\Indp
{ \small
    $u \gets \argmax_{v\in V} |{\{R \mid v \in R \; \wedge \; R \in \RR_j\}}| $
\hspace*{12ex} subject to: $\text{assigned}[v] = \text{false}$\; \label{algo-selectCA:line1}
    $cov_j(u) \gets |{\{R \mid u \in R \; \wedge \; R \in \RR_j\}}| $;
//{\tt find best cost-agnostic seed for ad $j$ as well as its coverage.}
}
\end{algorithm}

\begin{algorithm}[t!]
\caption{SelectBestCSNode($\RR_j$)}
\label{alg:rrBestCSNode}
\Indm
{\small
\SetKwInOut{Input}{Input}
\SetKwInOut{Output}{Output}
\SetKwComment{tcp}{//}{}
\Output{$(u, cov_j(u))$}
}
\Indp
{ \small
    $u \gets \argmax_{v\in V} \dfrac{|{\{R \mid v \in R \; \wedge \; R \in \RR_j\}}|}{c_j(v)} $
\hspace*{12ex} subject to: $\text{assigned}[v] = \text{false}$\; \label{algo-selectCS:line1}
    $cov_j(u) \gets |{\{R \mid u \in R \; \wedge \; R \in \RR_j\}}| $;
//{\tt find best cost-sensitive seed for ad $j$ as well as its coverage.}
}
\end{algorithm}
\DecMargin{1em}


\CA{While the core logic of \fastcs (resp. \fastca) is still based on the greedy seed selection outlined for \CSRM (resp. \CARM),  \fastcs (resp. \fastca) uses random RR-sets samples for the scalable estimation of influence spread.} \CA{Since \fastca and \fastcs are very similar, differing only in their greedy seed selection criteria, we only provide the pseudocode of \fastcs (Algorithm~\ref{alg:fastCS}).} \CA{Algorithm \fastcs works as follows. For every advertiser $j$, we initially set the latent seed set size \CN{$\tilde{s}_j = 1$} (a conservative but safe estimate), create a sample $\RR_j$ of $\theta_j = L_{\CC{j}}(\tilde{s}_j, \varepsilon)$ RR-sets, \CN{compute the refined budget $\rbudget{j}$ for $\tilde{s}_j$}, and the safe upper bound $\sib{j}$ (lines 1 -- \CN{6}). In the main loop, we follow the greedy selection logic of \CSRM. That is, in each round, we first invoke Algorithm~\ref{alg:rrBestCSNode} to find an unassigned candidate node $v_j$ that has the largest coverage-to-cost ratio~\footnote{\scriptsize Following the definition of \CN{$\srho_j(\cdot)$} as a function of \CN{$\spi_j(\cdot)$}, the node with the largest rate of marginal gain in revenue per marginal gain in payment for a given ad $j$ corresponds to the node $u$ with the largest coverage-to-cost ratio for ad $j$.} for each advertiser $j$ whose budget is not yet exhausted. Then, we select, among these (node,advertiser) pairs, the feasible pair $(v_i, i)$ that has the largest rate of marginal gain in revenue per marginal gain in payment and add it to the solution set, and remove from $\RR_i$ the RR-sets that are covered by node $v_i$ (lines \CN{10 -- 15}). While doing so, whenever \CN{$|\stilde_i|  = \tilde{s}_i$}, we update the latent seed set size $\tilde{s}_i$ using Eq.~\ref{eq:latentCA}, \CN{hence $\rbudget{i}$},  and sample \CC{$\max\{0, L_{\CC{i}}(\CN{\sib{i}}, \tilde{s}_i,\varepsilon) - \theta_i \}$} additional RR-sets into $\RR_i$. Note that, after adding additional RR-sets, we update the influence spread estimation of current $\stilde_i$ w.r.t. the updated sample $\RR_i$ by invoking Algorithm~\ref{alg:rrUpdateEst} to ensure that future marginal gain estimations are accurate (line \CN{22}). The main loop executes until the budget of each advertiser is exhausted or no more eligible seed can be found.}


For \fastca, there are only two differences. First, line~\ref{line:greedySelectBest} of Algorithm~\ref{alg:fastCS} is replaced by $$(v_j, cov_j(v_j)) \gets \mathsf{SelectBestCANode}(\RR_j) \;\; \mbox{(Algorithm~\ref{alg:rrBestCANode})}.$$ Second, line~\ref{line:greedyCriterCS} of Algorithm~\ref{alg:fastCS} is replaced by \begin{align*}i \leftarrow \argmax_{j=1}^h \pi_j(v_j | \stilde_j) \; \mbox{ subject to: } \rho_j(\stilde_j \cup \{v_j\}) \le B_j \; \\ \wedge \; \text{assigned}[v_j] = \text{false}.\end{align*}


\enlargethispage{\baselineskip}
\spara{Deterioration of approximation guarantees.} \LL{Since \fastca and \fastcs 
use random RR-sets for the accurate estimation of $\sigma_i(\cdot), \forall i \in [h]$, their approximation guarantees slightly deteriorate from the ones of \CARM and \CSRM (see Theorems \ref{theo:CARM} and \ref{theo:cs-earm1}). Such deterioration is common to all the state-of-the-art IM algorithms~\cite{borgs14, tang14, tang2015influence, NguyenTD16} that similarly use random RR-sets for influence spread estimation. Our next result provides the deteriorated approximation guarantees for \fastca and \fastcs. }
\CA{\begin{theorem}\label{theo:deterioratedGaranti}
\CC{W.p. at least $1-n^{-\ell}$, \fastca (resp. \fastcs) returns a solution $\vecstilde = (\stilde_1, \ldots, \stilde_h)$ that satisfies}
\begin{align*}
\pi(\vec{\tilde{S}}) &\ge \pi(\vec{S^*}) \cdot \beta  -  \sum_{i \in [h]} cpe(i) \cdot  \varepsilon \cdot OPT_{\tilde{s}_i}.
\end{align*}
where $\vecsstar = (\sstar_1, \ldots, \sstar_h)$ is the optimal allocation, \eat{$\vecstilde = (\stilde_1, \ldots, \stilde_h)$ is the approximate greedy solution that \fastca (resp. \fastcs) returns,} \CC{$\tilde{s}_i$ is the final latent seed set size estimated for each $i$ upon termination of \fastca (resp. \fastcs)}, and $\beta$ is the approximation guarantee given in Theorem \ref{theo:CARM} (resp. Theorem \ref{theo:cs-earm1}).
\end{theorem}}
\begin{proof}
\CC{Let $\vecssample = (\ssample_1, \ldots, \ssample_h)$ denote the optimal solution to \RM problem on the sample \CC{with refined budget constraints}, i.e., the feasible allocation that maximizes $\sum_{i \in [h]} \spi_i(S_i)$ subject to \CC{$\srho_i(S_i) \le \rbudget{i}$, $\forall i \in [h]$}. Since $\vecstilde$ is the cost-agnostic (resp., cost-sensitive) greedy solution to \RM on the sample, we have:
\begin{align}
\label{eq:equationrm1}
\sum_i \cpe{i} \cdot n \cdot \isfrac{\stilde_i} \ge \beta \cdot \sum_i \cpe{i} \cdot n \cdot \isfrac{\ssample_i}.
\end{align} 
Given that $\vecssample$ is the optimal solution to solving \RM on the sample, we also have: 
\begin{align}
\label{eq:equationrm2}
\sum_i \cpe{i} \cdot n \cdot \isfrac{\ssample_i} \ge \sum_i \cpe{i} \cdot n \cdot \isfrac{\sstar_i}.
\end{align}}

Furthermore, it follows from Lemma~\ref{lemma:newSS} that, for any set $S$ of at most $\sib{i}$ seeds, we have $\left| n\cdot F_{\RR_i}(S) - \sigma_i(S) \right| \ge \dfrac{\varepsilon}{2} \cdot OPT_{i, \tilde{s}_i}$ w.p. at most $\frac{n^{-\ell}}{\binom{n}{\sib{i}}}$. Notice that, we also have $|\sstar_i| \le \sib{i}$ by definition. Thus, by using Eqs.\ref{eq:equationrm1} and \ref{eq:equationrm2} and a union bound over all $\binom{n}{\sib{i}}$ estimations, w.p. at least $1-n^{-\ell}$ we have: 


\CN{\begin{align*}
&\sum_i \cpe{i} \cdot \sigma_i(\stilde_i) \\
&\ge \sum_i \cpe{i} \cdot \left(n \cdot \isfrac{\stilde_i} - \frac{\varepsilon}{2} \cdot OPT_{i,\tilde{s}_i} \right) \\
&= \sum_i \cpe{i} \cdot n \cdot \isfrac{\stilde_i} - \sum_i \cpe{i} \cdot \frac{\varepsilon}{2} \cdot OPT_{i,\tilde{s}_i} \\
&\ge \beta \cdot \sum_i \cpe{i} \cdot n \cdot \isfrac{\ssample_i} - \sum_i \cpe{i} \cdot \frac{\varepsilon}{2} \cdot OPT_{i,\tilde{s}_i} \\
&\ge \beta \cdot \sum_i \cpe{i} \cdot n \cdot \isfrac{\sstar_i} - \sum_i \cpe{i} \cdot \frac{\varepsilon}{2} \cdot OPT_{i,\tilde{s}_i} \\
&\ge \beta \cdot \sum_i \cpe{i} \cdot \CC{\left(\sigma_i(\sstar_i) - \frac{\varepsilon}{2} \cdot OPT_{i,\tilde{s}_i}\right)} \\
&- \sum_i \cpe{i}\cdot \frac{\varepsilon}{2} \cdot OPT_{i,\tilde{s}_i} \\
&\ge \beta \cdot \pi(\vecsstar) - \sum_{i \in [h]} \cpe{i} \cdot  \varepsilon \cdot OPT_{i,\tilde{s}_i},
\end{align*}}
\ravi{where the last inequality follows upon noting that $\beta < 1$.}  
\end{proof}

As a corollary to Theorem~\ref{theo:deterioratedGaranti}, Lemma~\ref{lemma:ubEta} and Lemma~\ref{lemma:newSS}, the following result is immediate. 

\CC{\begin{theorem}\label{theo:finalResult}
W.p. at least $1-n^{-\ell}$, \fastca (resp. \fastcs) returns an approximate greedy solution $\vecstilde = (\stilde_1, \ldots, \stilde_h)$ that is budget feasible, i.e., $\rho_i(\stilde_i) \le B_i$, for all $i$, and achieves an approximation that satisfies 
\begin{align*}
\pi(\vec{\tilde{S}}) &\ge \pi(\vec{S^*}) \cdot \beta  -  \sum_{i \in [h]} cpe(i) \cdot  \varepsilon \cdot OPT_{\tilde{s}_i}.
\end{align*}
$\beta$ is the approximation guarantee given in Theorem \ref{theo:CARM} (resp. Theorem \ref{theo:cs-earm1}).
\end{theorem}}

\eat{
\CA{\begin{proof}
We will first provide our analysis for the case of TIM~\cite{tang14}. We know that the expected spread of every set of size at most $s$ is accurately estimated (see Eq.~\ref{eq:Lemma3}), 
w.p. at least $1 - 1 / n^{\ell}$, via union bound. Then w.p. at least $1 - 1 / n^{\ell}$ we have:
\begin{align}\label{eq:timApprox}
\sigma(\tilde{S_g}) &\ge \sigma(S_g) - \varepsilon \cdot  OPT_{s}
\end{align}
where $S_g$ is the real greedy solution and $\tilde{S_g}$ is the approximate greedy solution that TIM returns, i.e., $n \cdot F_{\RR}(\tilde{S_g}) \ge n \cdot F_{\RR}(S_g)$. The correctness of Eq.~\ref{eq:timApprox} follows from the following case analysis: (\emph{i}) $\tilde{S_g}$ is the real greedy solution $S_g$ itself; (\emph{ii}) $\tilde{S_g}$ is a set with $\sigma(\tilde{S_g}) > \sigma(S_g)$; or (\emph{iii}) $\tilde{S_g}$ is a set with $\sigma(\tilde{S_g}) <  \sigma(S_g)$ such that its maximum possible accurate estimate (that satisfies Eq.~\ref{eq:Lemma3}) is higher than the minimum possible accurate estimate of $\sigma(S_g)$, hence it is returned by TIM instead of $S_g$, i.e., $\sigma(\tilde{S_g}) + \dfrac{\varepsilon}{2} \cdot  OPT_{s} \ge n \cdot F_{\RR}(\tilde{S_g}) \ge n \cdot F_{\RR}(S_g) \ge \sigma(S_g) - \dfrac{\varepsilon}{2} \cdot  OPT_{s}$. Obviously, the approximation guarantee does not deteriorate from $(1-1/e)$ for the first two cases.
For case (\emph{iii}) we have:
\begin{align}
\sigma(\tilde{S_g}) &\ge \sigma(S_g) - \varepsilon \cdot  OPT_{s}  \\
&\ge (1-1/e) \cdot OPT_s - \varepsilon \cdot  OPT_{s}.
\end{align}
Now, we start the deterioration analysis for \fastca and \fastcs with a similar reasoning as our analysis for TIM. Let $\vecalloc = (S_1, \cdots, S_h)$ denote the greedy solution that \CARM (resp. \CSRM) returns and let $\vec{\tilde{S}} = (\tilde{S_1}, \cdots, \tilde{S_h})$ denote the approximate greedy solution that \fastca (resp. \fastcs) returns. For each $i \in [h]$, denote its latent seed set size estimation as $s_i$, then w.p. at least $1 - 1 / n^{\ell}$ we have the following:
\begin{align}\label{eq:rmApprox}
\sigma_i(\tilde{S_i}) &\ge \sigma_i(S_i) - \varepsilon \cdot OPT_{{s}_i}
\end{align}
following an analysis analogous to that for TIM above.
Then, we have for each $i \in [h]$:
\begin{align*}
\pi(\tilde{S_i}) &=  cpe(i) \cdot \sigma_i(\tilde{S_i}) \\
&\ge cpe(i) \cdot (\sigma_i(S_i) -  \varepsilon \cdot OPT_{{s}_i})  \\
&= \pi_i(S_i) - cpe(i) \cdot  \varepsilon \cdot OPT_{{s}_i}.
\end{align*}
Hence, we have
\begin{align*}
\pi(\vec{\tilde{S}}) &\ge \pi(\vec{S}) - \sum_{i \in [h]} cpe(i) \cdot  \varepsilon \cdot OPT_{{s}_i} \\
&\ge \pi(\vec{S^*}) \cdot \beta  -  \sum_{i \in [h]} cpe(i) \cdot  \varepsilon \cdot OPT_{{s}_i}.
\end{align*}
where $\vec{S^*}$ is the optimal allocation and $\beta$ is the approximation guarantee given in Theorem \ref{theo:CARM} (resp. Theorem \ref{theo:cs-earm1}).
\end{proof}}
}

\section{Experiments}
\label{sec:experiments}
We conducted extensive experiments to evaluate $(i)$\ the quality of our proposed algorithms, measured by the revenue achieved vis \`{a} vis the incentives paid to seed users,
and $(ii)$\ the efficiency and scalability of the algorithms w.r.t. advertiser  budgets, which indirectly control the number of seeds required, and w.r.t. the number of advertisers, which effectively controls the size of the graph. 
All experiments were run on a 64-bit OpenSuSE Linux server with Intel Xeon 2.90GHz CPU and 264GB memory.
As a preview, our largest configuration is \livej with 20 ads, which effectively yields a graph with $69M \times 20 \approx 1.4B$ edges; this is comparable with  \cite{tang14}, whose largest dataset has 1.5B edges.

\smallskip\noindent\textbf{Data.}
Our experiments were conducted on four real-world social networks, whose basic statistics are summarized in Table~\ref{table:dataset}.
We used \flix and \epi  for quality experiments and \dblp and \livej for scalability experiments.
\flix is from a social movie-rating website (\url{http://www.flixster.com/}),  which contains movie ratings by users along with timestamps.
\LL{We use the topic-aware influence probabilities and the item-specific topic distributions provided by Barbieri et al.~\cite{BarbieriBM12}, who learned the probabilities using MLE for the TIC model, with $L=10$ latent topics. We set the default number of advertisers $h=10$ and used five of the learned topic distributions from the provided \flix dataset, in such a way that every two ads are in pure competition, i.e., have the same topic distribution, with probability $0.91$ in one randomly selected latent topic, and $0.01$ in all others. This way, among $h=10$ ads, every two ads are in pure competition with each other while having a completely different topic distribution than the rest, representing a diverse marketplace of ads. \epi is a who-trusts-whom network taken from a consumer review website (\url{http://www.epinions.com/}). Likewise, we set $h=10$ and use the Weighted-Cascade model~\cite{kempe03}, where $p_{u,v}^i = 1/{|N^{in}(v)|}$ for all ads $i$. Notice that this corresponds to $L=1$ topic for \epi dataset, hence, all the ads are in pure competition.} 

\begin{table}[t!]
\vspace{-3mm}

\small
\centering
\caption{Statistics of network datasets.\label{table:dataset}}

\begin{tabular}{|c | c | c | c | c| }
\hline
& \flix & \epi  & \dblp & \livej  \\ \hline
\#nodes & 30K & 76K & 317K  & 4.8M \\ \hline
\#edges & 425K & 509K & 1.05M & 69M  \\ \hline
type & directed & directed & undirected & directed \\ \hline
\end{tabular}
\end{table}

\begin{table}[t!]
\small
\centering
\caption{Advertiser budgets and cost-per-engagement values.\label{table:cpe}}
	
\begin{tabular}{|c | c | c | c | c | c | c|}
		\hline
		 &  \multicolumn{3}{|c|}{ Budgets} & \multicolumn{3}{|c|} {CPEs} \\
 		 \hline
		 Dataset & mean & max  & min & mean & max & min  \\ \hline
		\flix & 10.1K  & 20K & 6K  & 1.5 & 2 & 1  \\ \hline
		\epi & 8.5K & 12K  & 6K  & 1.5 & 2 & 1 \\ \hline
	\end{tabular}
\vspace{1mm}
\end{table}

For scalability experiments, we used two large networks\footnote{\scriptsize Available at \url{http://snap.stanford.edu/}.} \dblp and \livej.
\dblp is a co-authorship graph (undirected) where nodes represent authors and there is an edge between two nodes if they have co-authored a paper indexed by DBLP. We direct all edges in both directions. \livej is an online blogging site where users can declare which other users are their friends.
In all datasets, advertiser budgets and CPEs were chosen in such a way that the total number of seeds required for all ads to meet their budgets is less  than $n$. This ensures that no ad is assigned an empty seed set.
For lack of space, instead of enumerating all CPEs and budgets, we give a statistical summary in Table~\ref{table:cpe}.
The same information for \dblp and \livej in provided later.

\spara{Seed incentive models.}
\LL{In order to understand how the algorithms perform w.r.t.\ different seed user incentive assignments, we used \AR{four} different methods that directly control the range between the minimum and maximum singleton payments:
\squishlisttight
\item Linear incentives: proportional to the ad-specific singleton influence spread of the nodes, i.e., $c_i(u) = \alpha \cdot \sigma_i(\{u\}), \forall u \in V, i \in [h]$,
\item Constant incentives: the average of the ad-specific total linear seed user incentives, i.e., $c_i(u) = \alpha \cdot \dfrac{\sum_{v \in V}  \sigma_i(\{v\})}{n}, \forall u \in V, i \in [h]$,
\item Sublinear incentives: obtained by taking the logarithm of the ad-specific singleton influence spread of the nodes, i.e., $c_i(u) = \alpha \cdot \log(\sigma_i(\{u\})), \forall u \in V, i \in [h]$,
\item \AR{Superlinear incentives: obtained by using the squared ad-specific singleton influence spread of the nodes, i.e., $c_i(u) = \alpha \cdot \left(\sigma_i(\{u\})\right)^2, \forall u \in V, i \in [h]$},
\squishend
where $\alpha > 0$ denotes a fixed amount in dollar cents set by the host, which controls how expensive the seed user incentives are.
}

\begin{figure}[t!]
\vspace{-4mm}
\begin{tabular}{ccc}
\vspace{-4mm}\includegraphics[width=.24\textwidth]{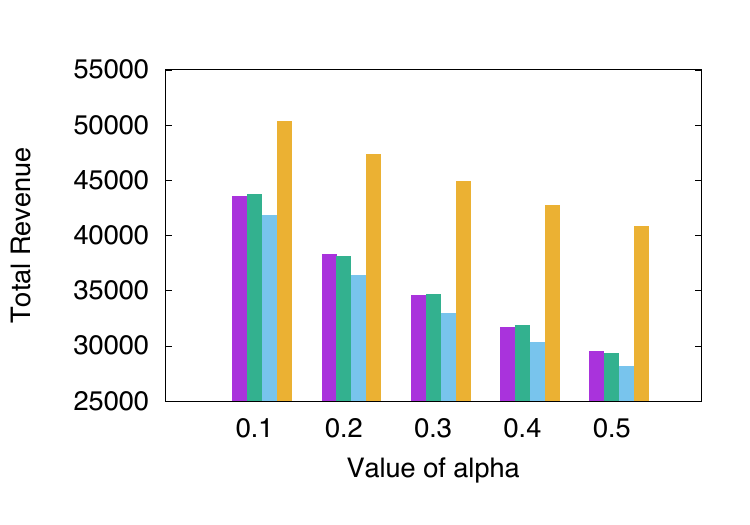}&
	\hspace{-2mm}\includegraphics[width=.24\textwidth]{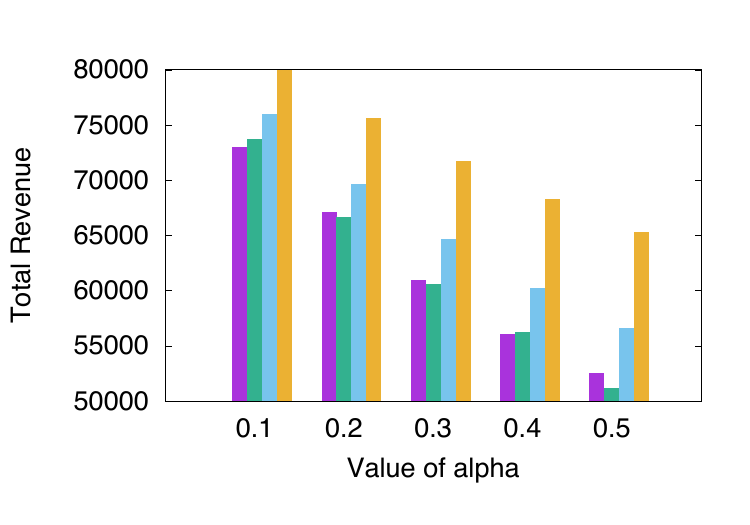}&
\hspace{-4mm}\begin{sideways} $\;$ $\;\;$ $\;\;$ $\;\;$ $\;\;$ \textsf{\small Linear} \end{sideways} \\
\vspace{-4mm}\includegraphics[width=.24\textwidth]{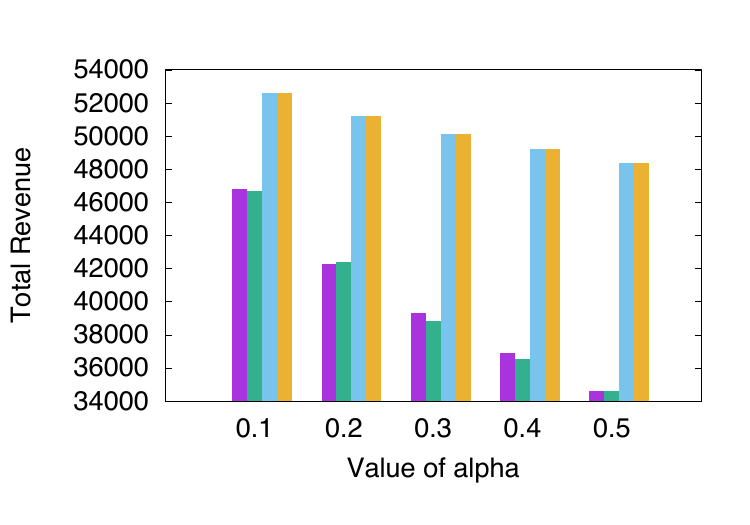}&
	 \hspace{-2mm}\includegraphics[width=.24\textwidth]{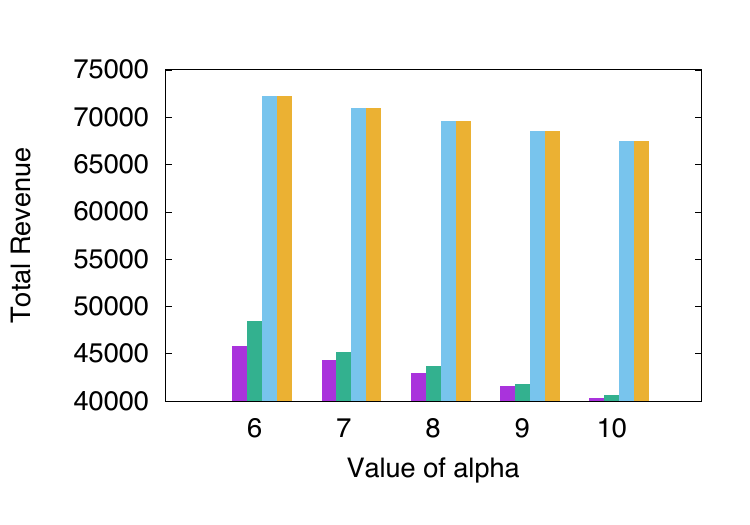} &
\hspace{-4mm}\begin{sideways}  $\;$ $\;\;$ $\;\;$ $\;\;$ $\;\;$ \textsf{\small Constant} \end{sideways}\\
\vspace{-4mm}\includegraphics[width=.24\textwidth]{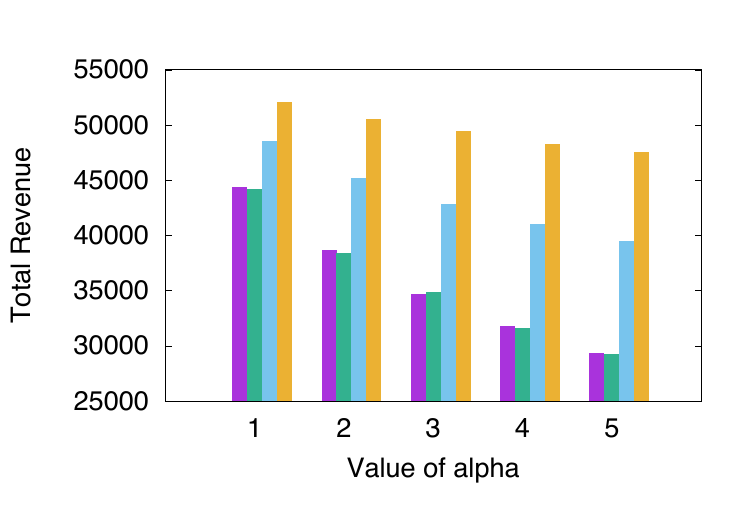}&
	 \hspace{-2mm}\includegraphics[width=.24\textwidth]{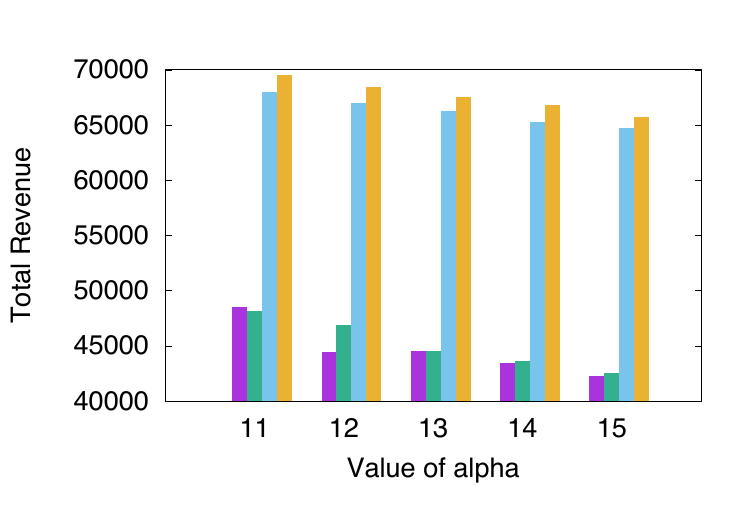} &
	  \hspace{-4mm}\begin{sideways} $\;$ $\;\;$ $\;\;$ $\;\;$ $\;\;$ \textsf{\small Sublinear} \end{sideways} \\
\includegraphics[width=.24\textwidth]{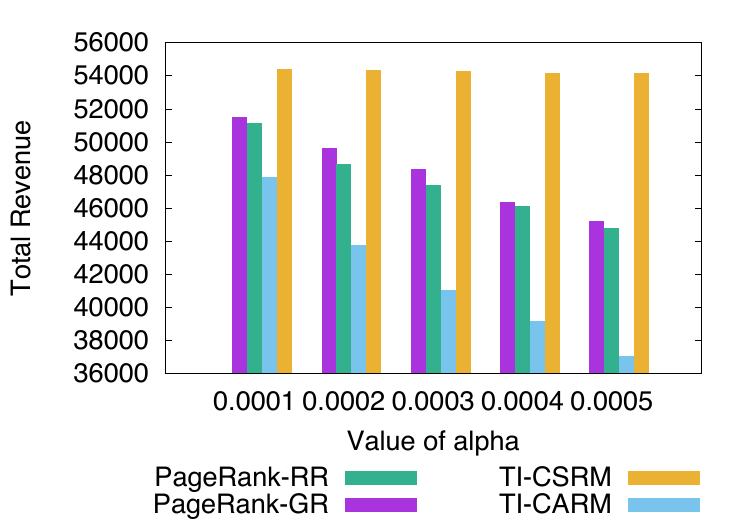}&
	 \hspace{-2mm}\includegraphics[width=.24\textwidth]{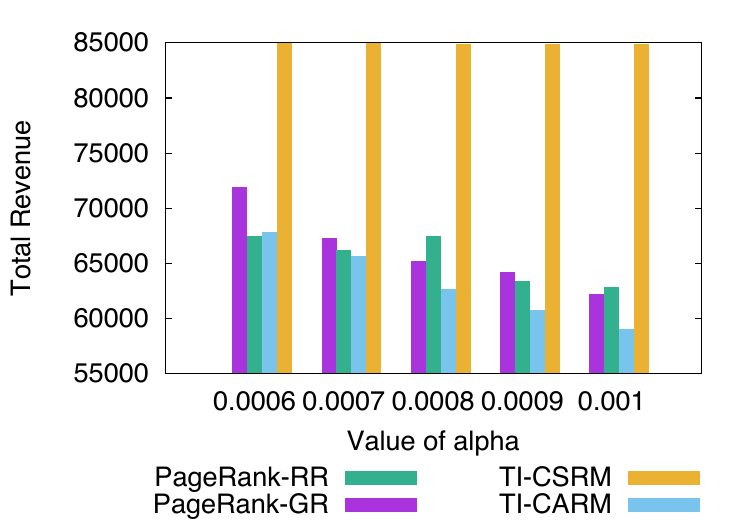} &
	  \hspace{-4mm}\begin{sideways} $\;$ $\;\;$ $\;\;$ $\;\;$ $\;\;$ \textsf{\small Superlinear} \end{sideways} \\ 	
\flix & \epi &
\end{tabular}
\caption{Total revenue as a function of $\alpha$, on \flix\ (left) and \epi\ (right), for linear, constant, sublinear, and superlinear incentive models.}
\label{fig:revenueAlphas}
\end{figure}
\eat{
\begin{figure}[t!]
\vspace{-4mm}
\begin{tabular}{ccc}
\vspace{-4mm}\includegraphics[width=.24\textwidth]{plots/alpha-nolegend/flix_alpha_totalRev_linear}&
	\hspace{-2mm}\includegraphics[width=.24\textwidth]{plots/alpha-nolegend/epi_alpha_totalRev_linear}&
\hspace{-4mm}\begin{sideways} $\;$ $\;\;$ $\;\;$ $\;\;$ $\;\;$ \textsf{\small Linear} \end{sideways} \\
\vspace{-4mm}\includegraphics[width=.24\textwidth]{plots/alpha-nolegend/flix_alpha_totalRev_uniform}&
	 \hspace{-2mm}\includegraphics[width=.24\textwidth]{plots/alpha-nolegend/epi_alpha_totalRev_uniform} &
\hspace{-4mm}\begin{sideways}  $\;$ $\;\;$ $\;\;$ $\;\;$ $\;\;$ \textsf{\small Constant} \end{sideways}\\
\includegraphics[width=.24\textwidth]{plots/alpha/flix_alpha_totalRev_sublinear}&
	 \hspace{-2mm}\includegraphics[width=.24\textwidth]{plots/alpha/epi_alpha_totalRev_sublinear} & \hspace{-4mm}\begin{sideways} $\;$ $\;\;$ $\;\;$ $\;\;$ $\;\;$ \textsf{\small Sublinear} \end{sideways} \\
\flix & \epi &
\end{tabular}
\caption{Total revenue as a function of $\alpha$, on \flix\ (left) and \epi\ (right), for linear (top), constant (middle), and sublinear (bottom) incentive models.}
\label{fig:revenueAlphas}
\end{figure}
}

On \flix and \epi we used Monte Carlo simulations (5K runs\footnote{\scriptsize We didn't observe any significant change in the influence spread estimation beyond 5K runs  for both datasets.}) to compute $\sigma_i(\{u\})$.  On \dblp and \livej, we use the out-degree of the nodes as a proxy to $\sigma_i(\{u\})$ due to the prohibitive computational cost of Monte Carlo simulations.


\smallskip\noindent\textbf{Algorithms.}
We compared \LL{four} algorithms in total. Wherever applicable, we set the parameter $\varepsilon$ to be $0.1$ for quality experiments on \flix and \epi, and $0.3$ for scalability experiments on \dblp and \livej, following the settings used in \cite{tang14}.

\squishlisttight
\item \fastcs (Algorithm~\ref{alg:fastCS}) that uses Algorithm~\ref{alg:rrBestCSNode} to find the best (cost-sensitive) candidate node for each advertiser (line~\ref{line:greedySelectBest}), and selects among those the (node, advertiser) pair that provides the maximum rate of marginal gain in revenue per marginal gain in advertiser's payment (line~\ref{line:greedyCriterCS}).
\item \fastca: Cost-agnostic version of Algorithm~\ref{alg:fastCS} that uses Algorithm~\ref{alg:rrBestCANode} to find the best (cost-agnostic) candidate node for each advertiser (replacing line~\ref{line:greedySelectBest}), and selects among those the (node, advertiser) pair with the maximum increase in the revenue of the host (replacing line~\ref{line:greedyCriterCS}).
\item \LL{PageRank-GR}: A baseline that selects a candidate node for each advertiser based on the ad-specific PageRank ordering of the nodes (replacing line~\ref{line:greedySelectBest}), and selects among those the (node, advertiser) pair that provides the maximum increase in the revenue of the host (replacing line~\ref{line:greedyCriterCS}). \textcolor{black}{Since the selection is made greedily, we refer to this algorithm as PageRank-GR.}
\item \LL{PageRank-RR: Another PageRank-based baseline that selects a candidate node for each advertiser based on the ad-specific PageRank ordering of the nodes (replacing line~\ref{line:greedySelectBest}), and uses a \textcolor{black}{Round-Robin (RR in short)} ordering of the advertisers for the assignment of their candidates into their seed sets.}
\squishend

\smallskip\noindent\textbf{Revenue vs.\ $\alpha$.}
\LL{We first compare the total revenue achieved by the four algorithms for four different seed incentive models and with varying levels of $\alpha$ (Figure~\ref{fig:revenueAlphas}). Recall that by definition, a smaller 
$\alpha$ value indicates lower 
seed costs for all users. Across all different values of $\alpha$ and all seed incentive models, it can be seen that \fastcs consistently achieves the highest revenue, often by a large margin, which increases as $\alpha$ grows. For instance, on \epi, when $\alpha=0.5$, \fastcs achieved 15.3\%, {24.3\%}, 27.6\% more revenue than \fastca, {PageRank-RR, and PageRank-GR} respectively on the linear incentive model, while these values for superlinear incentive model respectively are 25.2\%, {25.8\%}, 18.1\%. Notice that for the constant incentive model, the advantage of being cost-sensitive is nullified, hence \fastca and \fastcs end up performing identically as expected.}
 \LL{Figure~\ref{fig:costAlphas} reports the cost-effectiveness of the algorithms. Across all different values of $\alpha$ and all incentive models, it can be seen that \fastcs consistently achieves the lowest total seed costs. This is as expected, since its seed allocation strategy takes into account revenue obtained per seed user cost.}

\AR{Notice that in three of the test cases, i.e., linear seed incentives on \flix and superlinear seed incentives on both datasets, \fastca has slightly worse performance than the two PageRank-based heuristics (e.g., about 4--7\% drop in revenue).} This can be explained by the fact that, while \fastca picks seeds of high spreading potential (i.e., highest marginal revenue) without considering costs, the two PageRank-based heuristics may instead select seeds of low quality (i.e., low marginal revenue), but also of very low cost. This might create a situation in which the PageRank-based heuristics may select many more seeds, but with a smaller total seed cost than \fastca, hence, allowing the budget to be spent more on engagements that translate to higher revenue, mimicking the cost-sensitive behavior. On the other hand \fastcs always spends the given budget judiciously by selecting seeds with the best rate of marginal revenue per cost. Thus, it is able to use the budget more intelligently, which explains its superiority in all test cases. This hypothesis is confirmed by our experiments. E.g., on \flix with linear seed incentives, we observed that the average values of marginal gain in revenue, seed user cost, and rate of marginal gain per cost obtained by PageRank-GR were respectively $2.67$, $0.44$, and $7.48$, while the corresponding numbers  for \fastca were $13.47$, $2.7$, and $4.89$, and those for \fastcs were $1.28$, $0.12$, and $9.95$ respectively. While the two PageRank-based heuristics could obtain higher revenue than \fastca on \flix with linear and superlinear incentives, and on \epi with superlinear incentives, they were greatly outperformed by \fastca, hence \fastcs, in the other incentive models, showing that such heuristics are not robust to different seed incentive models, and can only get ``lucky" to the extent they can mimic the cost-sensitive behavior.

\begin{figure}[t!]
\vspace{-4mm}
\begin{tabular}{ccc}
\vspace{-4mm}\includegraphics[width=.24\textwidth]{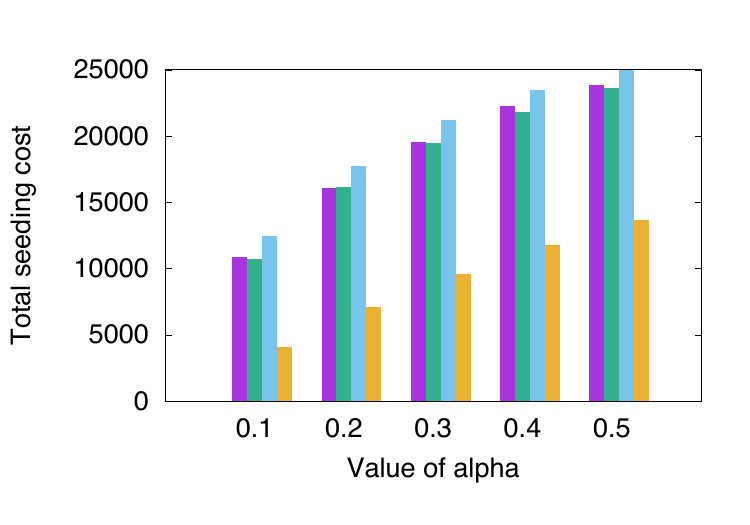}&
	\hspace{-2mm}\includegraphics[width=.24\textwidth]{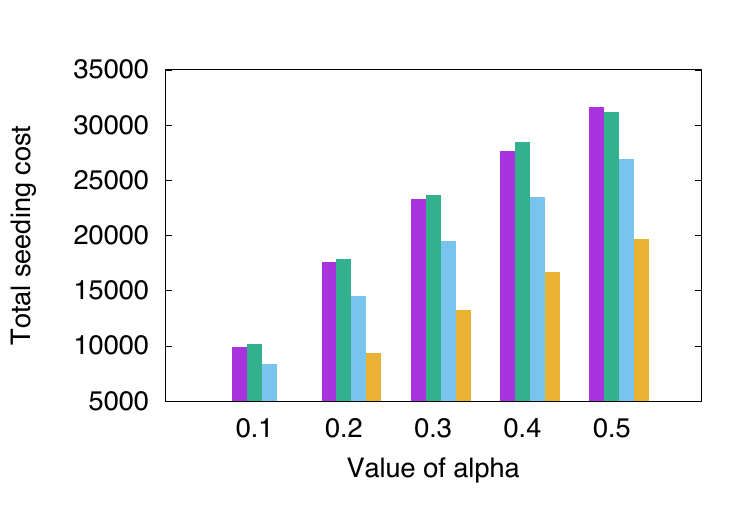}&
\hspace{-4mm}\begin{sideways} $\;$ $\;\;$ $\;\;$ $\;\;$ $\;\;$ \textsf{\small Linear} \end{sideways} \\
\vspace{-4mm}\includegraphics[width=.24\textwidth]{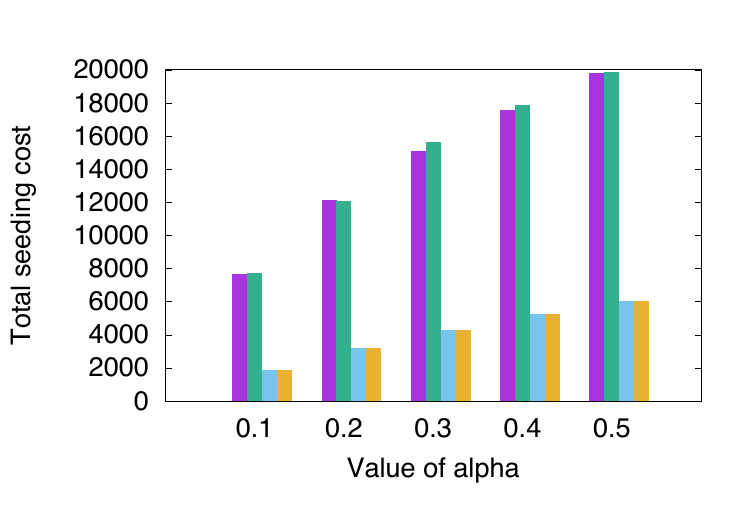}&
	 \hspace{-2mm}\includegraphics[width=.24\textwidth]{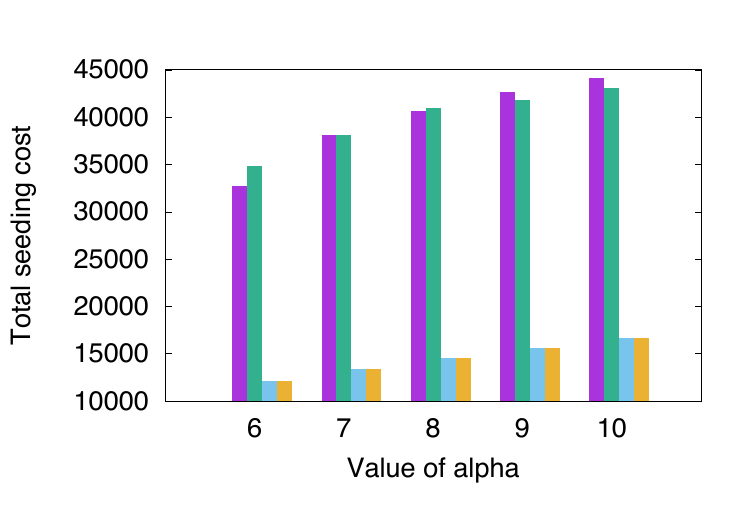} &
\hspace{-4mm}\begin{sideways}  $\;$ $\;\;$ $\;\;$ $\;\;$ $\;\;$ \textsf{\small Constant} \end{sideways}\\
\vspace{-4mm}\includegraphics[width=.24\textwidth]{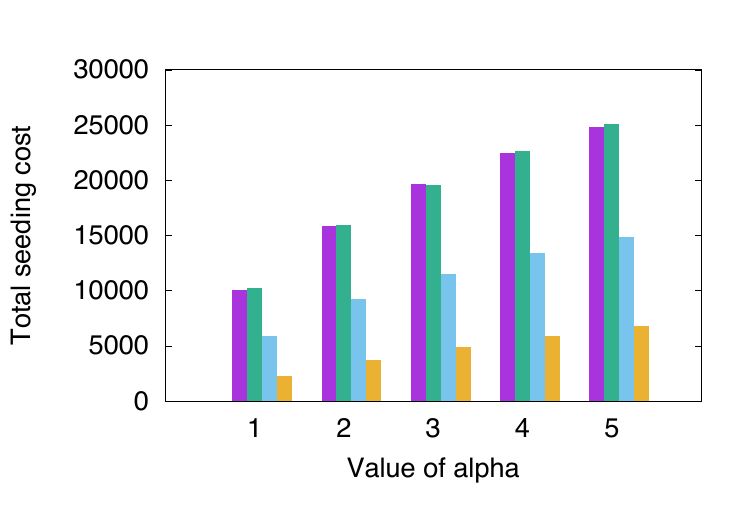}&
	 \hspace{-2mm}\includegraphics[width=.24\textwidth]{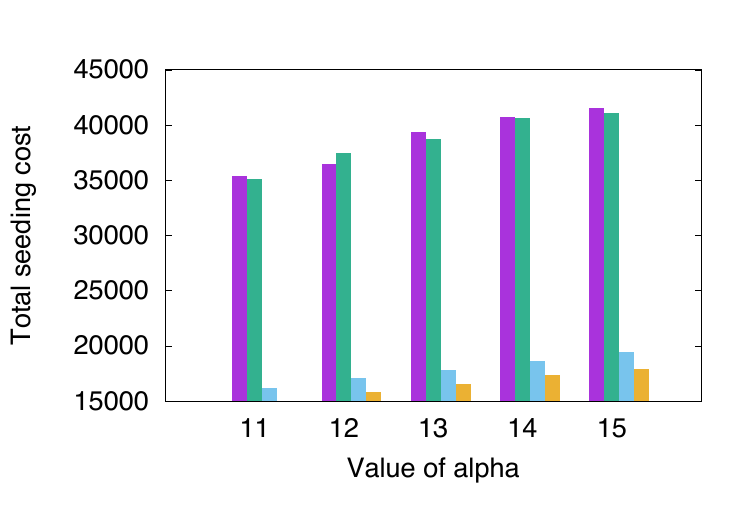} & \hspace{-4mm}\begin{sideways} $\;$ $\;\;$ $\;\;$ $\;\;$ $\;\;$ \textsf{\small Sublinear} \end{sideways} \\
\includegraphics[width=.24\textwidth]{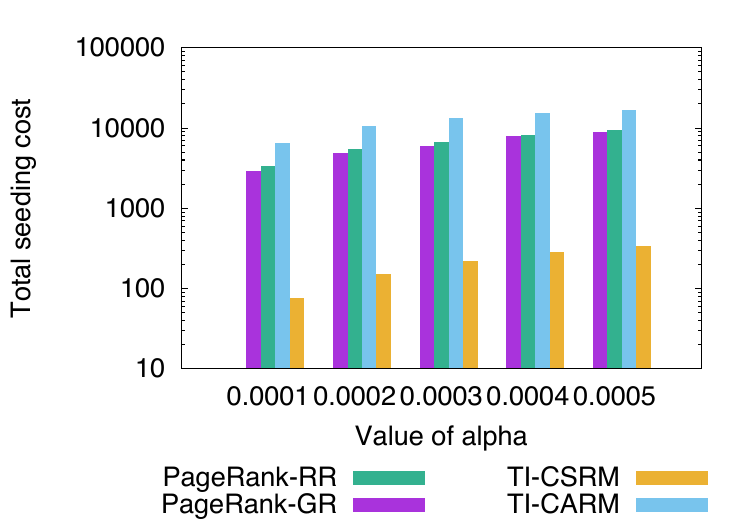}&
	 \hspace{-2mm}\includegraphics[width=.24\textwidth]{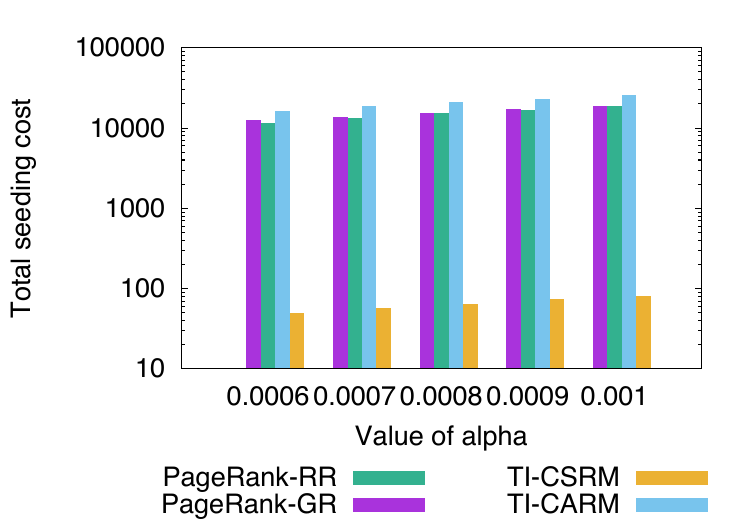} & \hspace{-4mm}\begin{sideways} $\;$ $\;\;$ $\;\;$ $\;\;$ $\;\;$ \textsf{\small Superlinear} \end{sideways} \\
\flix & \epi &
\end{tabular}
\caption{Total seeding cost as a function of $\alpha$, on \flix\ (left) and \epi\ (right), for linear, constant, sublinear, and superlinear cost models.}
\label{fig:costAlphas}
\end{figure}
\eat{
\begin{figure}[t!]
\vspace{-4mm}
\begin{tabular}{ccc}
\vspace{-4mm}\includegraphics[width=.24\textwidth]{plots/alpha-nolegend/flix_alpha_totalCost_linear}&
	\hspace{-2mm}\includegraphics[width=.24\textwidth]{plots/alpha-nolegend/epi_alpha_totalCost_linear}&
\hspace{-4mm}\begin{sideways} $\;$ $\;\;$ $\;\;$ $\;\;$ $\;\;$ \textsf{\small Linear} \end{sideways} \\
\vspace{-4mm}\includegraphics[width=.24\textwidth]{plots/alpha-nolegend/flix_alpha_totalCost_uniform}&
	 \hspace{-2mm}\includegraphics[width=.24\textwidth]{plots/alpha-nolegend/epi_alpha_totalCost_uniform} &
\hspace{-4mm}\begin{sideways}  $\;$ $\;\;$ $\;\;$ $\;\;$ $\;\;$ \textsf{\small Constant} \end{sideways}\\
\includegraphics[width=.24\textwidth]{plots/alpha/flix_alpha_totalCost_sublinear}&
	 \hspace{-2mm}\includegraphics[width=.24\textwidth]{plots/alpha/epi_alpha_totalCost_sublinear} & \hspace{-4mm}\begin{sideways} $\;$ $\;\;$ $\;\;$ $\;\;$ $\;\;$ \textsf{\small Sublinear} \end{sideways} \\
\flix & \epi &
\end{tabular}
\caption{Total seeding cost as a function of $\alpha$, on \flix\ (left) and \epi\ (right), for linear (top), constant (middle), and sublinear (bottom) cost models.}
\label{fig:costAlphas}
\end{figure}
}

\CA{Finally, as shown in Figure~\ref{fig:revenueAlphas}, the extent to which  \fastcs outperforms \fastca on both datasets is higher with linear incentives than with sublinear incentives. For instance, on \flix, \fastcs achieved $45\%$ more revenue than \fastca in the linear model, while this improvement drops to $20\%$ in the sublinear model. To understand how the seeds' expensiveness levels affect this improvement, we checked the values of singleton payments and found that the maximum singleton payment ($\rho_{max}$) is $1347$ times more expensive than the minimum singleton payment ($\rho_{min}$) in the linear model, while it is $725$ times more expensive in the sublinear model that has lower improvement rate. This relation is expected as higher variety in the expensiveness levels of the seeds require to use the budget more cleverly, hence, with more cost-effective strategies. Notice that this finding is also in line with our discussion following the proof of Theorem~\ref{theo:cs-earm1}.}

It is also worth noting that, from Figure~\ref{fig:costAlphas}, \fastcs is two to three orders of magnitude more cost-efficient than the rest in the superlinear model, and this gap is larger than that attained in linear, constant, and sublinear scenarios.



\smallskip\noindent\textbf{Revenue \& running time vs.\ window size.}
\textcolor{black}{Hereafter all presented results will be w.r.t.\ linear seed incentives, unless otherwise noted.}
As stated before in Section~\ref{sec:algorithms}, \fastcs needs to compute $\sigma_i(v | S^{t-1}_i)$, $\forall v : (v,i) \in \mathcal{E}^{t-1}$ while $u_i^{t}$ might even correspond to the node that has the \emph{minimum} marginal gain in influence spread for iteration $t$.
To have a closer look at how the revenue evolves when the seed selection criterion changes from cost-agnostic to cost-sensitive, we restrict  \fastcs to find the best cost-sensitive candidate nodes for each advertiser (line~\ref{line:greedySelectBest}) among only the $w$ nodes that have the highest marginal gain in revenue at each iteration. We refer to $w$ as the ``window size''.
Notice that \fastca corresponds to the case when $w=1$, i.e., in this case,  \fastcs inspects only the node with the maximum marginal gain in revenue.

We report the results of \fastcs with various window sizes in Fig.~\ref{fig:windowTradeoff}, which depicts the revenue vs. running time tradeoff.
Each figure corresponds to one dataset and one particular $\alpha$ value.
The $X$-axis is in log-scale.
As expected, the maximum revenue is achieved when \fastcs implements the full window $w = n$, \emph{i.e.}, when all the (feasible) nodes are inspected at each iteration for each advertiser.
The running time can go up quickly as the window size increases to $n$.
This is expected as the seed nodes selected do not necessarily provide high marginal gain in revenue, thus, \fastcs needs to use higher number of seed nodes, hence, much more RR-sets to achieve accuracy, compared to \fastca.

\begin{figure*}[t!]
\vspace{-4mm} 
\begin{tabular}{cccc}
    \includegraphics[width=.24\textwidth]{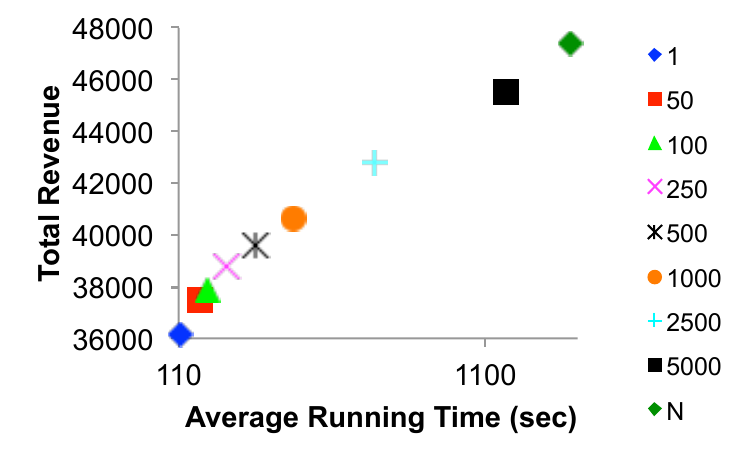}&
    \hspace{-2mm}\includegraphics[width=.24\textwidth]{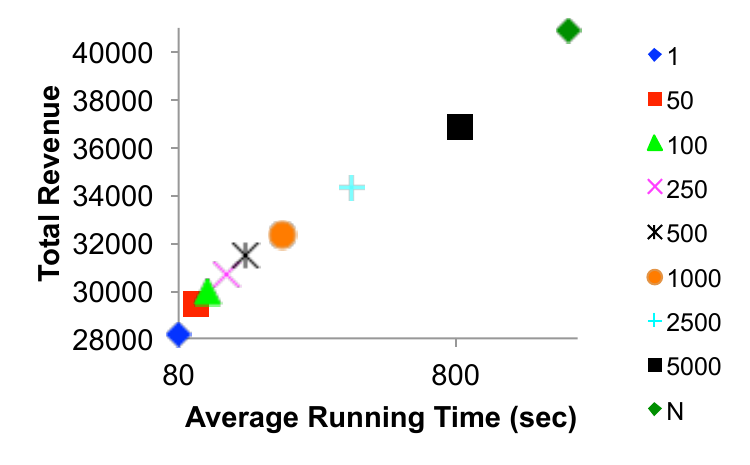}&
    \includegraphics[width=.24\textwidth]{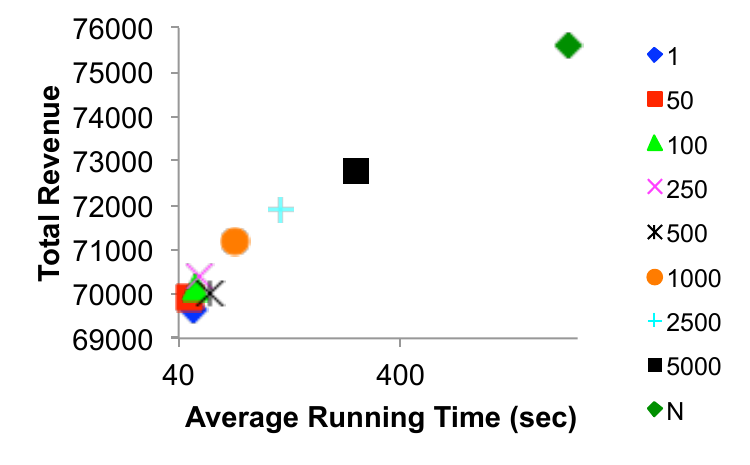}&
    \hspace{-2mm}\includegraphics[width=.24\textwidth]{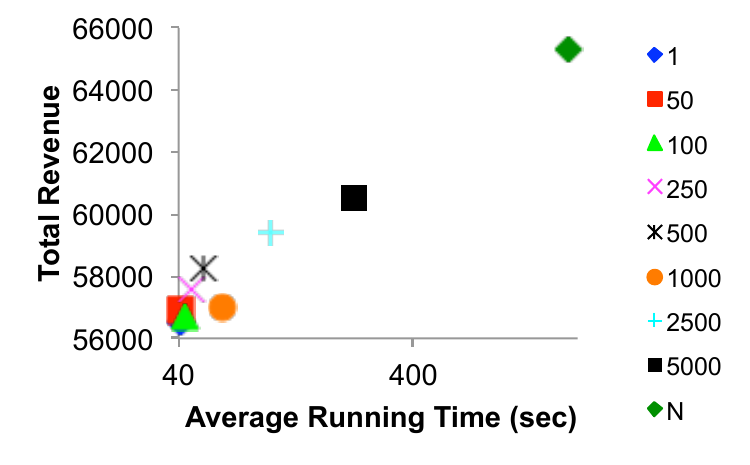}\\
	(a) \flix ($\alpha=0.2$)  & (b) \flix ($\alpha=0.5$) & (c)  \epi ($\alpha=0.2$) & (d) \epi ($\alpha=0.5$)   \\
\end{tabular}
\caption{Revenue vs running time tradeoff on \flix and \epi for two different value of $\alpha$.}
\label{fig:windowTradeoff}
\end{figure*}

\smallskip\noindent\textbf{Scalability.} We tested the scalability of \fastca and \fastcs on two larger graphs, \dblp and \livej. In all scalability experiments, we use a window size of $w = 5000$ nodes for \fastcs due to its good revenue vs running time trade-off. For simplicity, all CPEs were set to $1$.
The influence probability on each edge $(u,v)\in E$ was computed using the Weighted-Cascade model~\cite{kempe03}, where $p_{u,v}^i = 1/{|N^{in}(v)|}$ for all ads $i$.
We set $\alpha=0.2$ and $\varepsilon = 0.3$.
This setting is well-suited for testing scalability as it simulates a fully competitive case: all advertisers compete for the same set of influential users (due to all ads having the same distribution over the topics), and hence it will ``stress-test'' the algorithms by prolonging the seed selection process.

Figure~\ref{fig:time}(a) and \ref{fig:time}(b) depict the running time of \fastca and \fastcs as the number of advertisers goes up from 1 to 20, while the budget is fixed (10K for \dblp and 100K for \livej).
As can be seen, the running time increases mostly in a linear manner, and \fastcs is only slightly slower than \fastca.
Figure~\ref{fig:time}(c) and \ref{fig:time}(d) depict the running time of \fastca and \fastcs as the budget increases, while the number of advertisers is fixed at $h=5$ .
We can also see that the increasing trend is mostly linear for \fastcs, while \fastca's time goes in a flatter fashion.
All in all, both algorithms exhibit decent scalability.

Table~\ref{table:mem} shows the memory usage of \fastca and \fastcs when $h$ increases. \fastcs in general needs to use higher memory than \fastca due to its requirement to generate more RR sets that ensures accuracy for using higher seed set size than \fastca.
On \dblp, \fastca and \fastcs respectively uses a total of $4676$ and $7276$ seed nodes for $h=20$.
On \livej \fastcs used typically between 20\% to 40\% more memory than \fastca: \fastca and \fastcs respectively uses a total of $4327$ and $6123$ seed nodes for $h=20$.


\begin{table}
 \vspace{-4mm}
\centering
\small
 \caption{Memory usage (GB).\label{table:mem}}
 
  \begin{tabular}{|c|c|c|c|c|c|}
    \hline
    {\bf \dblp} & $h = 1$ & $5$ & $10$ & $15$ & $20$ \\ \hline
    \fastca &  1.6 & 7.5  & 14.9 & 22.4 &  29.8 \\ \hline
    \fastcs (5000) & 1.6 & 7.6 & 15.1 & 22.7 & 30.2 \\ \hline
    \hline
    {\bf \livej} & $h = 1$ & $5$ & $10$ & $15$ & $20$ \\ \hline
    \fastca & 2.5 & 12.1 & 25.3 & 39.4 &  54.4 \\ \hline
    \fastcs (5000) & 3.4 & 15.9 & 31.2 & 49.1 & 67.5  \\ \hline
  \end{tabular}
 \vspace{2mm}
\end{table}

\begin{figure*}
\begin{tabular}{cccc}
    \includegraphics[width=.23\textwidth]{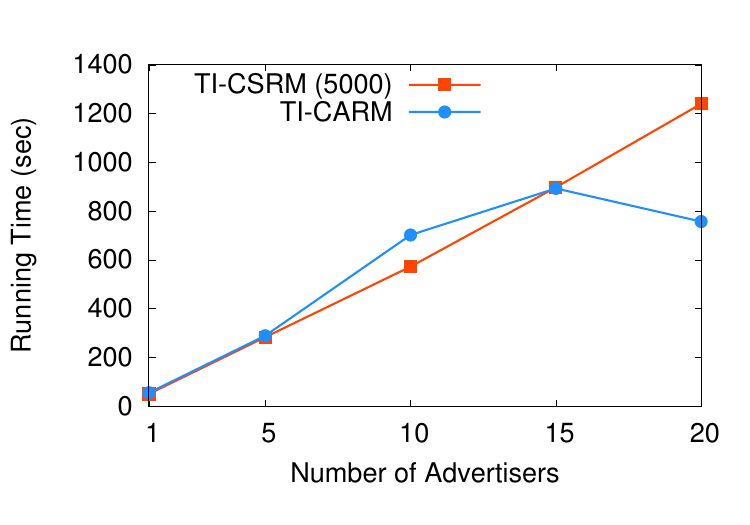}&
    \hspace{2mm}\includegraphics[width=.23\textwidth]{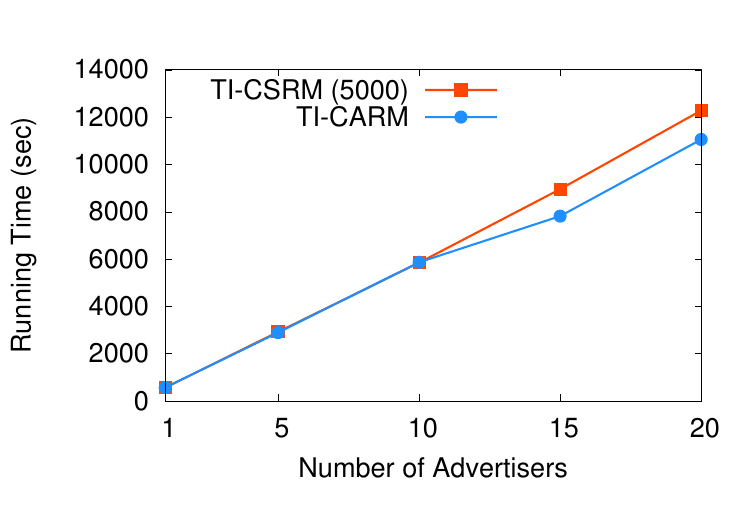}&
    \hspace{2mm}\includegraphics[width=.23\textwidth]{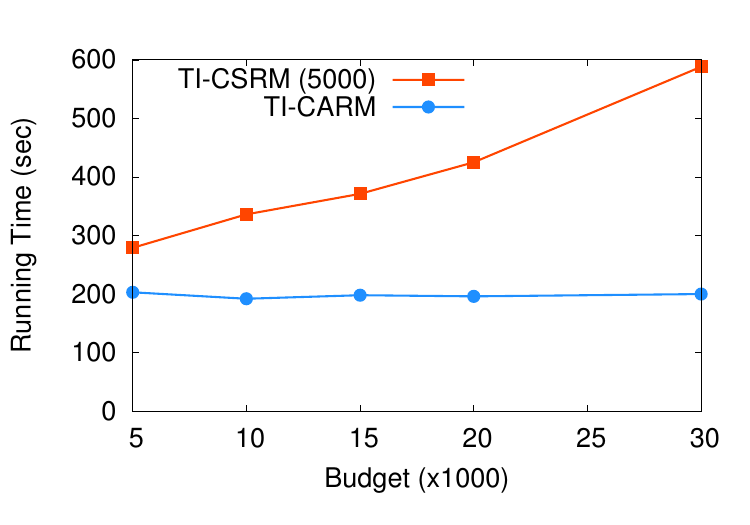}&
        \hspace{2mm}\includegraphics[width=.23\textwidth]{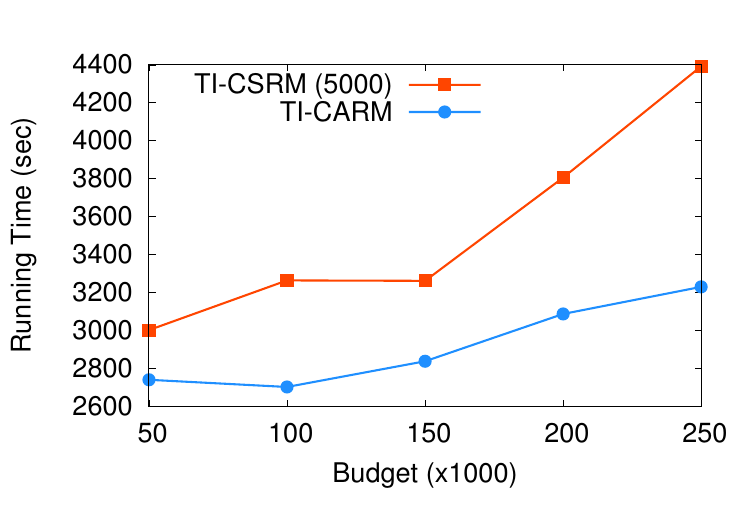}\\
	(a) \dblp ($h$) & (b) \livej ($h$)   & (c) \dblp (budgets)  & (d) \livej (budgets)  \\
\end{tabular}
\caption{Running time of \fastca and \fastcs on \dblp and \livej}
\label{fig:time}
\end{figure*}

\section{Related work}
\label{sec:related}
\enlargethispage{\baselineskip}

\spara{Computational advertising.}
\eat{
As advertising on the web has become one of the largest businesses during the last decade, the general area of \emph{computational advertising} has attracted a lot of research interest. The central problem of computational advertising is to find the ``best match'' between a given user in a given context and a suitable advertisement. The context could be a user entering a query in a search engine (``sponsored search"), reading a web page (``content match" and ``display ads"), or watching a movie on a portable device, etc.}
Considerable work has been done in sponsored search and display ads~\cite{goel08, feldman09,feldman10,mirrokni12,devanur12,mehta13}.
In sponsored search,
\eat{search engines show ads deemed relevant to user queries,
in the hope of maximizing click-through rates and in turn, revenue.}
revenue maximization is formalized as
the well-known {\em Adwords} problem~\cite{adwords}.
Given a set of keywords and bidders with their daily budgets and bids for each keyword, words need to be assigned to bidders upon arrival, to maximize the revenue for the day, while respecting bidder budgets.
This can be solved with a competitive ratio of $(1-1/e)$~\cite{adwords}.


\spara{Social advertising.} In comparison with computational advertising, social advertising is in its infancy. Recent efforts, including Tucker \cite{tucker12} and Bakshy et al.~\cite{bakshy12}, have shown, by means of field studies on sponsored posts in Facebook's News Feed,
the importance of taking social influence into account when developing social advertising strategies.
However, literature on exploiting social influence for social advertising is rather limited.
Bao and Chang have proposed \emph{AdHeat}~\cite{AdHeat}, a social ad model considering social influence
 in addition to relevance for matching ads to users.
Their experiments 
show that AdHeat significantly outperforms the relevance model on click-through-rate (CTR). Wang et al.~\cite{wang2011learning} propose a new model for learning relevance and apply it for selecting relevant ads for Facebook users.
\eat{
In both~\cite{AdHeat}~and~\cite{wang2011learning},  the proposed model is just assessed as a relevance model in terms of CTR: }
Neither of these works studies viral ad propagation or revenue maximization.

\eat{
The three papers most related to our proposal have been published in 2015. We next discuss them and how our proposal differentiates and advances beyond this prior work.
}

Chalermsook et al.~\cite{chalermsook} study revenue maximization for the host, when dealing with multiple advertisers. In their setting, each advertiser 
pays the host an amount 
for each product adoption, up to a 
budget.
In addition, each advertiser also specifies the maximum size 
of its seed set. This additional constraint considerably simplifies the problem compared to our setting, where the absence of a prespecified seed set size is a \emph{significant challenge}.
\eat{
Thus, in practice, they have a double budget: one on the size of the seed set, one on the total CPE. Not having seed set size specified beforehand is a {\em significant challenge} we address in our work.
}
\eat{Another key difference with our work is that their propagation model is not topic-aware. A topic-aware propagation model cleanly distingishes between ads that are orthogonal and those that compete for seed users, depending on their proximity in the topic space.}
\eat{
Another important difference is that in our model both ads and social influence are \emph{topic-aware}: this produces an interesting natural competition among ads which are close in a topic space for the attention of the users which are influential in the same area of the topic space. Instead Chalermsook et al.~\cite{chalermsook} adopt the simple IC model where all the ads are exactly the same.
}

Aslay et al.~\cite{AslayLB0L15} study regret minimization for a host supporting campaigns from multiple advertisers. Here, regret is the difference between the monetary budget of an advertiser and the value of expected number of engagements achieved by the campaign, based on the CPE pricing model. They share with us the pricing model and advertiser budget. However, they do not consider seed user costs. Besides they attack a very different optimization problem and their algorithms and results do not carry over to our setting.

\eat{
also study social advertising through the viral-marketing lenses and show that keeping into account the propensity of ads for viral propagation can achieve significantly better performance. However, uncontrolled virality could be undesirable for the host as it creates room for exploitation by the advertisers: hoping to tap uncontrolled virality, an
advertiser might declare a lower budget for its marketing campaign,
aiming at the same large outcome with a smaller cost. Therefore Aslay et al. study the problem of \emph{regret minimization} for ads allocation, where regret is the absolute value of the difference between the budget of one advertiser and the total cost paid by the advertiser to the host based on CPE pricing model.
Therefore they study a different optimization problem and their business model is different (as they do not consider an explicit monetary incentive to the seed users).
}

Abbassi et al.~\cite{AbbassiBM15} 
study a cost-per-mille (CPM) model in display advertising. The host enters into a contract with each advertiser to show their ad to a fixed number of users, for an agreed upon CPM amount per thousand impressions. The problem is that of selecting the sequence of users to show the ads to, in order to maximize the expected number of clicks. This is a substantially different problem which they show is APX-hard and propose heuristic solutions.

\eat{
differently from our work and~\cite{AslayLB0L15,chalermsook} which are all absed on CPE, they consider a CPM (cost-per-mille) pricing model: i.e., the advertiser enters in a contract with the host for its ad to be shown to a fixed
number of users, agreeing to pay a certain CPM amount for every thousand impressions. Under this model the number of engagements (or clicks) that the ad receives, does not directly influences the revenue of the host. However, optimizing click-through-rate is nevertheless an important goal as it makes more likely that the advertiser will come back for another advertising campaign. Therefore the problem studied by Abbassi et al~\cite{AbbassiBM15} is that of allocating ads to users so to optimize the number of clicks, for a predefined number of ads impressions, keeping in consideration social influence. Their results are mostly of theoretical interest and negative nature (i.e., hardness and strong inapproximability).
}

Alon et al.~\cite{alon-etal-opt-budget-www2012} study budget allocation among channels and influential customers, with the intuition that a channel assigned a higher budget will make more attempts at influencing customers. They do not take into account viral propagation. 
Their main result is that for some influence models the budget allocation problem can be approximated, while for others it is inapproximable. 
Notably, none of these previous works studies \emph{incentivized social advertising} where the seed users are paid monetary incentives.


\spara{Viral marketing.}
\eat{
As exemplified by the three papers discussed above, our work is also related to viral marketing, whose algorithmic optimization embodiment is the \emph{influence maximization} problem~\cite{kempe03, ChenWW10, goyal12}.
The wide literature on influence maximization is mostly based on the seminal work by Kempe et al. \cite{kempe03}, which formulated influence maximization as
a discrete optimization problem: given a social graph and a number $k$, find a set $S$ of $k$ nodes, such that by activating them one maximizes the expected
spread of influence $\sigma(S)$ under a certain propagation model, e.g., the {\em Independent Cascade} (IC)
 model. Influence maximization is \NPhard, but the function $\sigma(S)$ is
\emph{monotone}\footnote{$\sigma(S) \leq \sigma(T)$ whenever $S \subseteq T$.}  and \emph{submodular}\footnote{$\sigma(S \cup \{w\}) - \sigma(S) \geq \sigma(T \cup \{w\}) -
\sigma(T)$  whenever $S \subseteq T$.}~\cite{kempe03}.
Exploiting these properties, the simple greedy algorithm that at each step extends the seed set with  the node providing the largest marginal gain, provides a $(1 - 1/e)$-approximation to the optimum \cite{submodular}.
The greedy algorithm is computationally prohibitive, since selecting the node with the largest marginal gain is \SPhard~\cite{ChenWW10}.
In Kempe et al. \cite{kempe03} this computation  was approximated by numerous Monte Carlo simulations, which are computationally costly.
Therefore, considerable effort has been devoted to improve efficiency and scalability for influence maximization: the latest algorithmic advances towards scalable influence maximization~\cite{borgs14,tang14,cohen14,tang2015influence,NguyenTD16} have already been discussed in Section \ref{sec:algorithms}.
}
Kempe et al.~\cite{kempe03} formalize the influence maximization problem which requires to select $k$ seed nodes, where $k$ is a cardinality budget, such that the expected spread of influence from the selected seeds is maximized. Of particular note are the recent advances (already reviewed in Section~\ref{sec:algorithms}) that have been made in designing scalable approximation algorithms~\cite{borgs14,tang14,cohen14,tang2015influence,NguyenTD16} for this hard problem. Numerous variants of the influence maximization problem have been studied over the years, including competition~\cite{BharathiKS07,CarnesNWZ07}, host perspective~\cite{lu2013bang,AslayLB0L15}, non-uniform cost model for seed users~\cite{LeskovecKDD07,nguyen2013budgeted}, and fractional seed selection~\cite{DemaineHMMRSZ14}. However, to our knowledge, there has been no previous work that addresses incentivized social advertising, while leveraging viral propagation of social ads and handling advertiser budgets.

\eat{
The key difference between this literature and our setting, is that in the standard influence maximization the budget of an advertiser is modeled as a cardinality constraint on the number of free products to offer, hence the number of seed users to target~\cite{kempe03}. Some work has studied the possibility to target the seed users non-uniformly, that is to say that different seeds might have a different cost, in which case the budget of an advertiser is modeled as a monetary amount that will be spend on the non-uniform costs of incentivizing seed users\cite{LeskovecKDD07}. On the other hand, the real-world social advertisement models operate with monetary budgets, which is used not only for incentivizing the seed users, but more generally for paying the CPE. Hence, the classic treatment of budgets in the optimization of a viral marketing campaign is inadequate for modeling the real-world social advertising scenarios.
}

\section{Conclusions}
\label{sec:conclusions}
\enlargethispage{\baselineskip}

In this paper, we initiate the investigation of incentivized social advertising,
by formalizing the fundamental problem of revenue maximization from the
host perspective. In our formulation, incentives paid to the seed users are determined by
their demonstrated past influence in the topic of the specific ad. We show that,
keeping all important factors -- topical relevance of ads, their propensity for
social propagation, the topical influence of users, seed users' incentives, and advertiser
budgets -- in consideration, the problem of revenue maximization in incentivized
social advertising is NP-hard and it corresponds to the problem of monotone submodular
function maximization subject to a partition matroid constraint on the
ads-to-seeds allocation and multiple submodular knapsack constraints on the advertiser
budgets.
For this problem, we devise two natural greedy algorithms that differ in
their sensitivity to seed user incentive costs, provide
formal approximation guarantees, and achieve scalability by adapting to our context  recent advances made in scalable estimation of expected influence spread.

Our work takes an important first step toward enriching the framework of incentivized
social advertising with powerful ideas from viral marketing, while
making the latter more applicable to real-world online marketing. It opens up several interesting avenues for further research: \LL{$(i)$ it remains open whether our winning algorithm \fastcs can be made more memory efficient hence more scalable; $(ii)$ it remains open whether the approximation bound for \CSRM provided in Theorem~\ref{theo:cs-earm1} is tight; $(iii)$ it is interesting to integrate hard competition constraints into the influence propagation process; $(iv)$ it is worth studying our problem in an online adaptive setting where the partial results of the campaign can be taken into account while deciding the next moves.}
All  these directions  offer a
wealth of possibilities for future work.

%


\enlargethispage{\baselineskip}



\end{document}